\documentclass[runningheads]{llncs}

\usepackage{enumitem}
\usepackage{amssymb}
\usepackage{amsmath}
\usepackage{bm}

\usepackage{tikz}
\usetikzlibrary{arrows.meta,arrows}
\usetikzlibrary{shapes.geometric}
\usetikzlibrary{positioning}
\usetikzlibrary{backgrounds}

\newcommand{\frog}{\textsc{FRoG}}
\newcommand{\qedef}{\hfill$\diamond$}

\begin{document}

\title{The Complexity of Sequential Routing Games}

\author{Anisse Ismaili \inst{1,2}}

\authorrunning{A. Ismaili}

\institute{Multi-agent Optimization Team, RIKEN AIP Center, Tokyo, Japan\and
Yokoo Laboratory, Kyushu University, Fukuoka, Japan\\
anisse.ismaili@gmail.com}

\maketitle            

\begin{abstract}
We study routing games where every agent sequentially decides her next edge when she obtains the green light at each vertex. Because every edge only has capacity to let out one agent per round, an edge acts as a FIFO waiting queue that causes congestion on agents who enter. Given $n$ agents over $|V|$ vertices, we show that for one agent, approximating a winning strategy within $n^{1-\varepsilon}$ of the optimum for any $\varepsilon>0$, or within any polynomial of $|V|$, are PSPACE-hard. Under perfect information, computing a subgame perfect equilibrium (SPE) is PSPACE-hard and in FPSPACE. Under imperfect information, deciding SPE existence is PSPACE-complete.
\keywords{Sequential Routing Games \and Computational Complexity}
\end{abstract}

\section{Introduction}



%
%
%
%

Models for routing games stem from applications in road traffic \cite{wardrop1952road} and packet routing via Internet Protocol \cite{koutsoupias1999worst}. Many agents share a same routing network (a graph of vertices and edges) without any central authority, and each agent decides a path, with the individualistic goal to minimize her delay. However, by using the shared resources of the network, agents incur congestion to one another: they play a game.
\emph{Static} routing games, where every individual's decided path congests its edges all at once, made history as the first application for the Price of Anarchy (PoA), a measure for the loss of efficiency due to selfish behaviors \cite{roughgarden2002bad,roughgarden2005selfish,christodoulou2005price,awerbuch2005price,nisan2007algorithmic,roughgarden2009intrinsic}.
Static routing games, being potential games, also have the very satisfying property that a best-response dynamics always converges to a pure Nash equilibrium (PNE).

However, static models do not reflect the temporality of travels. Indeed, an agent, when on one edge of her path, should not be congesting the other edges. In routing games \emph{over time}, every agent travels through time as well as along the edges of her path. This more recent model was also deeply studied \cite{koch2009nash,anshelevich2009equilibria,hoefer2009competitive,werth2014atomic,harks2016competitive,ismaili2017,Harks:2018:CPR:3182630.3184137}. 
Although natural, adding such temporality in delays calculations increases the complexity of games: pure-strategy Nash equilibria may not exist; computing a best-response or an equilibrium are computationally intractable; the PoA can be large. 

Still, a last temporal aspect is missing from routing games over time. Although the locality of congestion was related with the time at which it occurs, agents might also make decisions along time. \emph{Sequential} routing games were introduced even more recently \cite{cao2017ec,cao2017arxiv}. Every agent, when she reaches any node, is allowed to observe previous actions, in order to decide the next course of actions, like the next edge on her path. 
In this sequential setting, winning strategies and subgame perfect equilibrium (SPE) are standard concepts for individual rationality and a stable game outcome.

\textbf{Related Work.}
There is a rich literature on \emph{routing games over time}.
%
In \cite{harks2006competitive,harks2009competitive}, 
there are multiple asymmetric commodities that are routed in a sequence.
The cost of edges is an affine function of the load.
For splittable commodities, the PoA is almost four, and for unsplittable ones it is $3+2\sqrt{2}$.
The best-response problem is NP-hard.
%
\cite{koch2009nash,koch2011nash}
generalizes deterministic queuing into competitive flows over time.
In this non-atomic symmetric model, 
an edge's outflow is determined by its capacity, 
and also has a fixed transit delay.
An iterative algorithm is proposed, where $\varepsilon$-Nash flows converge to a Nash flow.
%
\cite{anshelevich2009equilibria} studies routing games with non-atomic asymmetric agents. 
While for symmetric agents, a PNE is guaranteed to exist and efficiently computable, 
in the asymmetric case and under FIFO, an equilibrium may not exist. 
The PoA is bounded from below by the number of vertices. 
%
\cite{hoefer2009competitive,hoefer2011competitive}
proposes temporal network congestion games. 
Every edge has a processing speed $a_e\in\mathbb{R}_{>0}$, and various local policies are studied.
Under FIFO, edge $e$ processes one agent in time $a_e$, while the other agents wait. 
In the unweighted symmetric case, despite the NP-hardness of computing a best-response, a PNE can always be efficiently computed. Otherwise, an equilibrium may not exist.
%
While costs in \cite{koch2009nash} where rationals, \cite{werth2014atomic} focuses on atomic agents and integer time-steps. Every edge is associated with a capacity (maximal flow-per-time-step) and a fixed transit delay. 
A PNE may not exist; 
the best-response problem is NP-complete; 
PNE verification is coNP-complete;
PNE existence is at least NP-hard;
a bound is provided on the PoA.
%
%
\cite{harks2016competitive,Harks:2018:CPR:3182630.3184137} studies a model related to \cite{werth2014atomic}.
The focus is on local and overall priorities on the edges. (Crossovers may occur.)
Several bounds on the price of stability and PoA are shown.
It is APX-hard to compute optimal priority lists.
Under local priorities, the best-response and PNE problem are NP-hard.
\cite{ismaili2017} studies a model related to \cite{harks2016competitive} and similar to \cite{werth2014atomic} under FIFO.
Best-responses are inapproximable, and deciding whether a PNE exists is complete for class $\Sigma_2^P$.

%
Concerning \emph{sequential routing games},
\cite{cao2017arxiv,cao2017ec} are the first to propose a model where agents take new decisions on each vertex. In this very interesting setting, when several agents arrive on the same edge during the same round, ties are broken by an order on incoming edges. It allows for plays that do not depend on agents' IDs.
It also guarantees the existence of a subgame perfect equilibrium (when all agents share the same sink), in a constructive way: the iterative dominating path profile algorithm \cite[Alg. 1]{cao2017arxiv} outputs a path profile induced by an SPE. It is also reminded that an SPE might not exist when one uses an overall order on agents for tiebreaking \cite[Ex. 2]{cao2017arxiv}. 

Besides, \cite{papadimitriou2010new} studies a new model where decisions are made by nodes instead of flows.
There is also a diverse literature on repeated routing games,
where sequentiality comes from a repetition of static routing games (see e.g. \cite{blum2006routing,roughgarden2009intrinsic,chien2011convergence} or \cite{monnot2017routing} in real life).
While it is well known that problem QSAT (also known as TQBF) is PSPACE-complete \cite{garey1979computers}, the Canadian Traveler Problem \cite{PAPADIMITRIOU1991} was an inspiration for settling our complexities.

\textbf{Our contributions.}
Here, we study sequential routing games that are played on a digraph, round after round.
W.l.o.g. on negative results, we assume that every edge has capacity and length one.
Every agent has to travel from her source-vertex to her sink-vertex in as few rounds as she can.
Congestion is modeled by a very natural FIFO queuing policy on every edge: 
agents who enter the edge are queued, and an edge lets out one agent per round.
At every round, the top agents of all waiting lists pop out from their current edges, decide their next edges and queue there.
More precisely, we study an imperfect information setting where the top agents decide their next edges simultaneously,
and a perfect information setting where they decide in an ordered sequence.
We study the computational complexity of standard rationality and stability concepts, 
and show the following results in both sequential and simultaneous action settings.
\begin{enumerate}[leftmargin=16mm]
\item[\bf Th. 1] Deciding whether an agent has a strategy that will guarantee her total delay below a given threshold is PSPACE-complete.
\item[\bf Th. 2] Let $n$ be the number of agents and $|V|$ the number of vertices. Approximating an agent's optimal strategy within $n^{1-\varepsilon}$ for any $\varepsilon>0$, and within any polynomial of $|V|$ are both PSPACE-hard. 
\item[\bf Th. 3] Under perfect information, computing a subgame perfect equilibrium is PSPACE-hard and in FPSPACE. Under imperfect information, deciding SPE existence is PSPACE-complete.
\end{enumerate}

\vspace*{-6mm}

\section{Preliminaries}
\vspace*{-2mm}

We introduce relevant notation on routing games and sequential games.
Without loss of generality on negative results, edges have unitary length and capacity.
\footnote{A similar assumption (unitary length and capacity) is made in \cite{cao2017arxiv}. Since the present hardness results still hold in this particular case, it means that computational complexity does not come from numbers defining length and capacity.}

\begin{definition}[Sequential First-in-first-out Routing Game (FRoG)]~\\
A \frog\ is characterized by a  tuple $\left(G=(V,E),N,(s_i,s_i^\ast)_{i\in N}\right)$ where:
\begin{itemize}
\item $G=(V,E)$ is a finite digraph with vertex set $V$ and edges\footnote{An edge $e=(u_e,v_e)$ is a couple of vertices. Its tail (resp. head) is $u_e$ (resp. $v_e$).} $E\subseteq V\times V$,
\item finite set $N=\{1,\ldots,n\}$ is the set of agents, and
\item for every agent $i$, vertex $s_i\in V$ is her source and $s_i^\ast\in V\setminus\{s_i\}$ her sink. \qedef
\end{itemize}
\end{definition}
\noindent 
For every edge $e$, set $F(e)\subseteq E$ denotes the \emph{successors} of $e$: all the edges whose tail is the head of $e$.\footnote{Vertices are mere connections. Edges are the only decided and time-costly resources.}
For every agent $i$, starting from her source $s_i$, the goal is to travel a path\footnote{A path is a finite list of edges such that for two consecutive edges the head of the former is the tail of the later. We allow edge repetitions and directed cycles. We assume that there always exists a path from $s_i$ to $s_i^\ast$.} $\pi_i$ that reaches her sink $s_i^\ast$, by deciding a successor edge each time she reaches a head.
Every edge $e$ acts as a FIFO \emph{queuing list}\footnote{A queuing list admits two operations: queuing and popping. Under FIFO ordering, queuing adds an agent at the end of the list, and popping takes the first agent.} $Q_e$ on agents:
every agent who enters $e$ is queued in $Q_e$.

A game is played sequentially over \emph{rounds} $r\in\mathbb{N}$.
Here is the main loop: once per round, every non-empty list $Q_e$ simultaneously pops its first agent $i$, who simultaneously decides her next edge $e'$ in successors $F(e)$, and enters waiting queue $Q_{e'}$. If the head of edge $e$ is already $i$'s sink $s_i^\ast$, she simply exits the system.
(If two agents arrive on the same edge at the same time, rules are precised later.)
Congestion comes from the fact that edges let out at most one agent per round.

If the game ends in a finite number of rounds\footnote{An agent can cycle or decide to get stuck somewhere with no way to reach her sink.},
every agent reached her sink,
and the outcome is
a path-profile $(\pi_1,\ldots,\pi_n)\in\mathcal{P}_1\times\ldots\times\mathcal{P}_n$, which we denote in bold by $\bm{\pi}\in\bm{\mathcal{P}}$, and where $\mathcal{P}_i$ denotes the set of finite paths from agent $i$'s source $s_i$ to her sink $s_i^\ast$.
Informally, the \emph{total-delay} (defined later) that an agent $i$ seeks to minimize is $C_i(\bm{\pi})=\sum_{\ell=1}^{|\pi_i|}w_i(\bm{\pi},\ell)$ 
where $w_{i}(\bm{\pi},\ell)\in\mathbb{N}_{\geq 1}$ is agent $i$'s position in waiting list $Q_{\pi_i(\ell)}$ when she enters edge $\pi_i(\ell)$.

\begin{example}
In this \frog, vertices are circles, edges (with unitary capacities and delays) are arrows. 
Agent $1$ (resp. $2$) start from $a$ (resp. $b$). Both go to $k$, under simultaneous actions and agent priority $2\succ 1$ (see tiebreaking rule (RO)).
\begin{center}
\begin{tikzpicture}
\draw[]
	node[circle,draw=black,scale=0.67] (a) at (0,0) {$a$}
	node[circle,draw=black,scale=0.67] (b) at (1.2,0) {$b$}
	node[circle,draw=black,scale=0.67] (c) at (2,+1) {$c$}
	node[circle,draw=black,scale=0.67] (d) at (2.3,0) {$d$}
	node[circle,draw=black,scale=0.67] (e) at (2,-1) {$e$}
	node[circle,draw=black,scale=0.67] (f) at (4,+1) {$f$}
	node[circle,draw=black,scale=0.67] (g) at (4,-1) {$g$}
	node[circle,draw=black,scale=0.67] (h) at (5,-0.5) {$h$}
	node[circle,draw=black,scale=0.67] (i) at (6,1) {$i$}
	node[circle,draw=black,scale=0.67] (j) at (6,0) {$j$}
	node[circle,draw=black,scale=0.67] (k) at (7.5,0.5) {$k$};
\draw[]
	(a) edge[-{Stealth[scale=0.6]}] (c)
	(a) edge[-{Stealth[scale=0.6]}] (e)
	(b) edge[-{Stealth[scale=0.6]}] (d)
	(c) edge[-{Stealth[scale=0.6]}] (f)
	(d) edge[-{Stealth[scale=0.6]}] (f)
	(d) edge[-{Stealth[scale=0.6]}] (g)
	(e) edge[-{Stealth[scale=0.6]}] (g)
	(f) edge[-{Stealth[scale=0.6]}] (i)
	(f) edge[-{Stealth[scale=0.6]}] (j)
	(g) edge[-{Stealth[scale=0.6]}] (h)
	(h) edge[-{Stealth[scale=0.6]}] (j)
	(i) edge[-{Stealth[scale=0.6]}] (k)
	(j) edge[-{Stealth[scale=0.6]}] (k);
\node[rectangle,draw=black,fill=black!20, thick,left = 0.25mm of a,scale=0.75]{$1$};
\node[rectangle,draw=black,fill=black!20, thick,left = 0.25mm of b,scale=0.75]{$2$};
\node[diamond,draw=black,fill=black!20, thick,right = 0.25mm of k,scale=0.75, inner sep=0.5mm]{$1$};
\node[diamond,draw=black,fill=black!20, thick,right = 5.25mm of k,scale=0.75, inner sep=0.5mm]{$2$};
\node[circle,draw=black,fill=white,inner sep=1mm] at (5,1) {~};
\end{tikzpicture}
\end{center}
\end{example}

\begin{definition}[Consecutive Configurations and History]
For every round $r\in\mathbb{N}$,
{configuration} $Q(r)=(Q_e(r))_{e\in E}$ is defined as the ordered contents of every waiting list, before tops are popped  to end round $r$ and agents decide their next edges.\footnote{While $Q(0)$ is empty, for $r\geq 1$, configuration $(Q_e(r))_{e\in E}$ is a partition of the agents still in the system. Agents on top of queues in $Q(r)$ decide a new edge at round $r$.} 
A history  $H(r)=(Q(0),\ldots,Q(r))$ is a sequence of consecutive configurations:
from history $H(r)$, if configuration $Q(r)$ is non-empty, we obtain a {consecutive configuration} $Q(r+1)$ by the following decisions.
\begin{enumerate}
\item[a.] For round $r=0$, all the agents {simultaneously} decide their first edge (with tail $s_i$), and get queued there, defining $Q(1)$.
\item[b.] Now, we focus on the agents who need to make a decision:\\
For round $r\geq 1$, let set $M_{H(r)}$ be every agent $i$ on top of a queue $Q_e(r)$ s.t. $F(e)\neq\emptyset$ and the head of edge $e$ is not her sink $s_i^\ast$. All the agents in $M_{H(r)}$ simultaneously pop and decide their next edges $e'\in F(e)$, queuing in the corresponding waiting lists $Q_{e'}(r+1)$, and thus defining $Q(r+1)$.
\item[c.] For round $r\geq 1$, every agent $i$ on top of a queue $Q_e(r)$ s.t. the head of edge $e$ is $i$'s sink $s_i^\ast$ pops and exits the system with {total-delay} $C_i=r$. If $F(e)=\emptyset$ and the head is not her sink, she exists with total-delay $C_i=\infty$.
\end{enumerate}
For round $r\geq 0$,
the set of all possible histories is denoted by $\mathcal{H}(r)$.\qedef
\end{definition}

\begin{definition}[Game-tree]
The game-tree $\Gamma=(\mathcal{H}, \mathcal{Q})$ is defined as follows.
\begin{itemize}
\item Nodes are all possible histories: $\mathcal{H}=\bigcup_{r\geq 0} \mathcal{H}(r)$. History $H(0)$ is the root. 
\item For any round $r\geq 0$, a transition from history $H(r)$ to $(H(r),Q(r+1))$ exists in $\mathcal{Q}$ if and only if $Q(r+1)$ is a consecutive configuration for $H(r)$.
\end{itemize}
Given game-tree $\Gamma=(\mathcal{H}, \mathcal{Q})$ and history $H(r)\in\mathcal{H}$, we define $\Gamma[H(r)]$ the subtree rooted on $H(r)$. It also defines the subgame starting from $H(r)$.\qedef
\end{definition}

The game-tree is a (possibly infinite) tree graph which entirely describes how the game can be played. On each node/history, agents make their decisions, branching into a consecutive history, until all agents reached their sink.

\begin{definition}[Strategy]
For agent $i$, a strategy $\sigma_i$ is a function which maps every history $H(r)$ in which agent $i$  has to decide a new edge (i.e. agent $i$ belongs to $M_{H(r)}$) to her next edge.
Let $\Sigma_i$ denote the set of possible strategies for $\sigma_i$.
A strategy-profile $(\sigma_1,\ldots,\sigma_n)\in\Sigma_1\times\ldots\times\Sigma_n$, which we denote in bold by $\bm{\sigma}\in\bm{\Sigma}$, defines a strategy $\sigma_i$ for every agent $i$, along with an adversary profile $\bm{\sigma}_{-i}$.

A \emph{strategical} (resp. \emph{trivial}) agent is an agent for who there is more than one (resp. exactly one) $s_i-s_i^\ast-$path. 
\qedef
\end{definition}
A strategy could be a huge or infinite object, representing it entirely is intractable. Instead, one may  assume that an agent has her own \emph{polynomial-time} algorithm for deciding her next edge, function of current configuration or history. 

A \frog\ ends in a finite number of rounds if an agent never decides to go where there is no path to her sink and never visits the same vertex/edge twice (e.g. if $G$ is acyclic or as a strategy).

\begin{definition}[Induced path-profile]
Given strategy-profile $\bm{\sigma}$, agents play their strategies on the game.
If the game ends in a finite number of rounds, then it generates one finite sequence of edges per agent: a unique path-profile $\bm{\pi}(\bm{\sigma})$.\qedef
\end{definition}
\begin{definition}[Total-delay]
For every agent $i$ and path-profile $\bm{\pi}$, 
let $C_i(\bm{\pi})$ be the total-delay of agent $i$: 
the round when she reaches her sink $s_i^\ast$ and exits the system.
It can be defined as $C_i(\bm{\pi})=\sum_{\ell=1}^{|\pi_i|}w_i(\bm{\pi},\ell)$ 
where $w_{i}(\bm{\pi},\ell)\in\mathbb{N}_{\geq 1}$ is agent $i$'s position in waiting list $Q_{\pi_i(\ell)}$ when she enters edge $\pi_i(\ell)$.

Function $C_i$ extends to strategy-profiles $\bm{\sigma}$ by mapping from path-profile $\bm{\pi}(\bm{\sigma})$ if the game ends in a finite number of rounds. Otherwise, an agent who reaches her goal in a finite number of rounds has this number as a total delay,
and an agent who doesn't get delay $C_i(\bm{\sigma})=\infty$.
Furthermore, function $C_i(\bm{\sigma} \mid H(r))$ is agent $i$'s total-delay when strategies $\bm{\sigma}$ are played starting from history $H(r)$, that is in subgame $\Gamma[H(r)]$.\qedef
\end{definition}

It is worth noting that a path-profile $(\pi_1,\ldots,\pi_n)$ can be mapped in polynomial time to total delays $(C_1(\bm{\pi}),\ldots,C_n(\bm{\pi}))$ by mean of a Dijkstra-style pop-the-next-event algorithm similar to \cite[Prop 2.2]{harks2016competitive} or \cite[Th. 2]{ismaili2017}.

\begin{definition}[Subgame perfect equilibrium (SPE)]
Given a \frog, an SPE is a strategy-profile 
$(\sigma_1,\ldots,\sigma_n)$
such that in game-tree $\Gamma=(\mathcal{H}, \mathcal{Q})$, 
for any history $H(r)\in\mathcal{H}$, any agent $i$ in $M_{H(r)}$ and any deviation $\sigma_i'\in\Sigma_i$, one has:
$$C_i(\bm{\sigma}\mid H(r)) \quad \leq\quad  C_i(\sigma_i',\bm{\sigma}_{-i}\mid H(r)).\vspace*{-8mm}$$
\qedef
\end{definition}
In other words, in any subgame, strategy-profile $\bm{\sigma}$ is a pure Nash equilibrium. 
When an SPE is played, the game always ends in a finite number of rounds, since it is in players' interests. 

\begin{example}
We depict below the game-tree of Example 1.
Every rectangle represents the set of deciding agents $M_{H(r)}$.
On every transition, we represent the edges decided.
Total delays are 2-vectors on leaves. SPEs are in bold.
\begin{center}
\begin{tikzpicture}
\draw[]
	node[rectangle,draw=black] (A) {1}
	node[rectangle,draw=black, below right= 3mm and 25mm of A] (B) {2}
	node[rectangle,draw=black, below left= 3mm and 15mm of A] (C) {2}
	node[rectangle,draw=black, below right= 4mm and 25mm of B] (D) {1,2}
	node[rectangle,draw=black, below left= 4mm and 8mm of B] (E) {1}
	node[rectangle,draw=black, below right= 4mm and 5mm of C] (F) {2}
	node[below left= 4mm and 5mm of C,scale=0.67] (G) {$\bm{(6,5)}$}
	node[below right= 5mm and 15mm of D,scale=0.67] (H) {$\bm{(6,5)}$}
	node[below right= 8mm and -2mm of D,scale=0.67] (I) {$\bm{(5,4)}$}
	node[below left= 8mm and -2mm of D,scale=0.67] (J) {$\bm{(4,5)}$}
	node[below left= 5mm and 15mm of D,scale=0.67] (K) {$\bm{(5,4)}$}
	node[below right= 5mm and -2mm of E,scale=0.67] (L) {$\bm{(5,5)}$}
	node[below left= 5mm and -2mm of E,scale=0.67] (M) {$\bm{(4,5)}$}
	node[below right= 5mm and -2mm of F,scale=0.67] (N) {$\bm{(5,5)}$}
	node[below left= 5mm and -2mm of F,scale=0.67] (O) {$\bm{(5,4)}$};
\draw[]
	(A) edge[-{Stealth[scale=0.6]},ultra thick] node[above,scale=0.75]{$(a,c)$} (B) 
	(A) edge[-{Stealth[scale=0.6]},ultra thick] node[above,scale=0.75]{$(a,e)$} (C) 
	(B) edge[-{Stealth[scale=0.6]}, ultra thick] node[above right,scale=0.75]{$(d,f)$} (D) 
	(B) edge[-{Stealth[scale=0.6]}] node[above left,scale=0.75]{$(d,g)$}(E) 
	(C) edge[-{Stealth[scale=0.6]}, ultra thick] node[above right,scale=0.75]{$(d,f)$}(F) 
	(C) edge[-{Stealth[scale=0.6]}] node[above left,scale=0.75]{$(d,g)$}(G)
	(D) edge[-{Stealth[scale=0.6]}] node[above right,scale=0.75]{$(f,i)(f,i)$} (H)
	(D) edge[-{Stealth[scale=0.6]}, ultra thick] node[right,scale=0.5]{$(f,i)(f,j)$}(I) 
	(D) edge[-{Stealth[scale=0.6]}] node[left,scale=0.5]{$(f,j)(f,i)$}(J) 
	(D) edge[-{Stealth[scale=0.6]}, ultra thick] node[above left,scale=0.75]{$(f,j)(f,j)$} (K)
	(E) edge[-{Stealth[scale=0.6]}] node[above right,scale=0.75]{$(f,i)$}(L) 
	(E) edge[-{Stealth[scale=0.6]}, ultra thick] node[above left,scale=0.75]{$(f,j)$}(M)  
	(F) edge[-{Stealth[scale=0.6]}] node[above right,scale=0.75]{$(f,i)$} (N) 
	(F) edge[-{Stealth[scale=0.6]},ultra thick] node[above left,scale=0.75]{$(f,j)$} (O);
\end{tikzpicture}
\end{center}
\end{example}

\subsection{Discussion on Tie-breaking, Actions, Information and Existence}

In the simultaneous-action setting defined above,
several agents may simultaneously queue on a same waiting list $Q_e$, 
during a \emph{same round}, hence a tie breaking rule for who queues first is required. Two exist in literature:
\begin{enumerate}[leftmargin=11mm]
\item[(RO)] There is an overall order $\succ$ on agents, and those with the higher priority queue first. However, in this case, an SPE is not guaranteed to exist.\footnote{Here, we do not consider mixed strategies.} (See e.g. \cite[Fig. 2]{cao2017arxiv}.)
\item[(RE)] For every vertex $u$, a priority $\succ_u$ is defined on incoming edges. If all agents share same sink, an SPE is guaranteed to exist \cite[Sec. 3]{cao2017arxiv}. 
However, if sinks are different, an SPE is not guaranteed.\footnote{See e.g. the pursuer-follower counter-example in \cite[Fig. 1]{ismaili2017}. The same construction, but with priorities on incoming edges, is strategically equivalent. Moreover, since the only two strategical agents decide everything simultaneously on the first round, pure Nash equilibrium and SPE coincide (as an empty concept there).} 
\end{enumerate}

The simultaneousness of decisions in each round, induces imperfect information:
agents in set $M_{H(r)}$ do not know what the others decide at round $r$. 
In order to obtain a setting with sequential actions and perfect information, which guarantees existence of an SPE by backward induction, we use the same setting as above, with the following change that slices rounds into agent turns.
\begin{enumerate}[leftmargin=11mm]
\item[(RR)] Agents in $M_{H(r)}$ (instead of deciding simultaneously) take turn w.r.t. an agent order $1\succ\ldots\succ n$ for deciding their next edge and queue there. Hence, only one agent per (slice of) round makes a decision and acts. 
\end{enumerate}

\subsection{Computational problems studied}

The main issue lies in settling the complexity\footnote{We assume as common knowledge: decision and function problem, length function, complexity classes P, NP, PSPACE and FPSPACE, many-to-one reduction, hardness, completeness and decision problems 3SAT and QSAT (also known as TQBF).} of 
computing a subgame perfect equilibrium (when it always exists),
or of deciding whether one exists (if it may not).
While the length function of \frog s is polynomial in $|V|$ and $n$,
a strategy $\sigma$ is by definition an output of intractable length.
Therefore, when SPE is guaranteed, computation of its induced path profile is a more formalizable problem. We explore this sequence of problems, where $\mathcal{R}$ refers to rules (RO) and (RR):
\begin{itemize}[leftmargin=3.5cm]
\setlength{\itemsep}{0.4em}
\item[\textsc{FRoG/$\mathcal{R}$/Br}:~~~~~]
Given a \frog , an agent $i$, an adversary path-profile $\bm{\pi}_{-i}$ and a threshold $\theta\in\mathbb{N}$, decide whether there is a path $\pi_i$ with total delay $C_i(\pi_i,\bm{\pi}_{-i})\leq\theta$.
\item[\textsc{FRoG/$\mathcal{R}$/Win}:~~~]
Given a \frog , an agent $i$ and a threshold $\theta\in\mathbb{N}$, decide whether there is a strategy $\sigma_i$ that guarantees her total delay $C_i(\sigma_i,\bm{\sigma}_{-i})\leq\theta$ for any adversary strategy $\bm{\sigma}_{-i}$.
\item[\textsc{FRoG/RO/Exist}:~]
Given a \frog , does it admit an SPE?
\item[\textsc{FRoG/RR/Find}:]
Given a \frog , find a path-profile induced by an SPE.
\end{itemize}

\section{The Complexity of Winning Strategies}

In this section, we show that deciding whether an agent has a winning strategy (one that grants her total delay below some threshold) is PSPACE-complete under simultaneous actions (RO) and sequential actions (RR).

\begin{theorem}\label{th:win}
\textsc{FRoG/$\mathcal{R}$/Win} is PSPACE-complete under (RO) and (RR).
\end{theorem}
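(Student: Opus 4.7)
The plan is to prove PSPACE-completeness in two parts: membership, and hardness via a reduction from QSAT, handled uniformly for (RO) and (RR).

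\emph{Membership.} I would give a recursive alternating evaluator on the game tree $\Gamma$. A history $H(r)$ can be encoded as the ordered contents of every waiting list, of total size $O(|E|+n\log n)$. Since the question is whether agent $i$ can guarantee total delay $\leq \theta$, exploration only needs to reach depth $\theta$ rounds: past round $\theta$, if $i$ has not exited, the current branch is already losing for her. At each history I existentially try every next edge of $i$ and universally quantify over the joint next-edge choices of the other agents in $M_{H(r)}$. Under (RR) this amounts to one move per slice-turn; under (RO) it is a joint simultaneous profile. In both cases the recursion depth is polynomial in $\theta$ and $n$, and each stack frame stores only a configuration, so the total space is polynomial in the input.

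\emph{Hardness.} I would reduce from QSAT. Given $\Phi = \exists x_1 \forall x_2 \cdots Q_m x_m\, \phi(x_1,\ldots,x_m)$ with $\phi$ in 3CNF over clauses $C_1,\ldots,C_k$, I construct a \frog\ with a designated \emph{test agent} $t$. The graph is a chain of $m$ \emph{variable diamonds} followed by a \emph{clause-checking gadget}. Each diamond has a \emph{true}-branch and a \emph{false}-branch of equal length joining an entry vertex to an exit vertex. For $\exists x_j$, the test agent $t$ picks the branch in diamond $j$ herself. For $\forall x_j$, an adversary agent $a_j$ is placed at a dedicated source and channelled along a mandatory stem whose length is calibrated so $a_j$ arrives at diamond $j$ exactly when $t$ does; her one edge choice clogs one of $t$'s two options, forcing $t$ down the other. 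The clause-checking gadget contains one \emph{clause agent} $c_\ell$ per clause, whose deterministic trajectory meets $t$'s mandatory post-chain trajectory on a shared edge iff the implicit running assignment falsifies $C_\ell$. The threshold $\theta$ is set to the delay of $t$ when every clause is satisfied, so that any unsatisfied clause adds an extra queuing delay and lifts $t$'s total delay strictly above $\theta$.

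Since the quantifier prefix of $\Phi$ is encoded in the temporal order in which diamonds are traversed, $t$ has a strategy achieving delay $\leq\theta$ iff $\Phi$ is true. Under (RR) the required alternation is native; under (RO), I would offset each $a_j$'s source with a sufficiently long unique-successor stem so that in each round only the intended pair of decisions occurs, every other active agent being trapped on a trivial edge. The main obstacle is engineering the clause-checking gadget so that (a) a satisfied clause induces no delay, (b) an unsatisfied clause induces strict positive delay, and (c) neither $t$ nor the adversary has any profitable deviation via cycling or shortcutting. I would enforce (c) by making every non-diamond edge unique-successor, removing all strategic freedom except the intended variable choices, and ensure (a)–(b) by dedicating disjoint auxiliary routes to each $c_\ell$ so that penalties from distinct unsatisfied clauses accumulate additively on $t$ rather than masking one another.
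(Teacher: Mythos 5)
Your membership argument is essentially the paper's (a polynomial-depth, polynomial-space minimax exploration of the game tree); one small caveat is that bounding the recursion depth by $\theta$ only works after observing that $\theta$ may be assumed polynomially bounded (say by $|E|\cdot n$, beyond which the instance is trivially a yes), since $\theta$ written in binary would otherwise permit exponential depth.

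The hardness reduction has a genuine gap, and it sits exactly where you flag ``the main obstacle'': the clause-checking gadget. You make $t$'s post-diamond trajectory mandatory and ask for a single deterministic clause agent $c_\ell$ that collides with $t$ if and only if $C_\ell$ is falsified. Falsification of a 3-clause is a \emph{conjunction} of three conditions (all three literals false), so the collision time of one agent would have to compute an AND of three timing signals propagated from the variable diamonds; FIFO queuing natively gives an OR/counting effect (being queued behind $k$ colliders delays you by $k$ rounds), not a clean AND, and you give no construction for one. Your requirement (c) --- stripping $t$ of all routing freedom outside the diamonds --- also forecloses the natural fix. The paper resolves this the opposite way: between consecutive clause vertices $C_j$ and $C_{j+1}$ it places \emph{three} two-edge paths, one per literal, and lets $t$ choose among them, so the disjunction is evaluated by $t$'s own routing; a ``loosener'' gadget then propagates each chosen literal forward so that the corresponding clause path costs $2$ instead of $3$ at one precise round, with a dedicated lemma showing the gadget cannot be abused as a shortcut. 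Without either that three-way choice or an explicit AND gadget, your reduction does not go through as stated. Two further calibration issues: if $a_j$ reaches diamond $j$ \emph{exactly} when $t$ does, then under simultaneous actions $t$ cannot observe $a_j$'s block before committing (the adversary must arrive one round earlier so its choice is already in the history); and ``clogging'' one branch only forces $t$ down the other if the threshold is tight enough that absorbing the one-round queueing penalty is strictly losing.
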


The remainder of this section consists in the proof of Th. \ref{th:win}. In Def. \ref{def:loosener}, we first define gadget games  that allow preference inversions which break Bellman's principle and cause computational intractability.
Then, for didactic reasons, in Lem. \ref{lem:br}, we consider a best-response decision problem related to \cite[Th. 3]{ismaili2017}, which lets us introduce smoothly an original many-one reduction from 3SAT that is then generalized to QSAT by reusing the same structure and gadget-games.

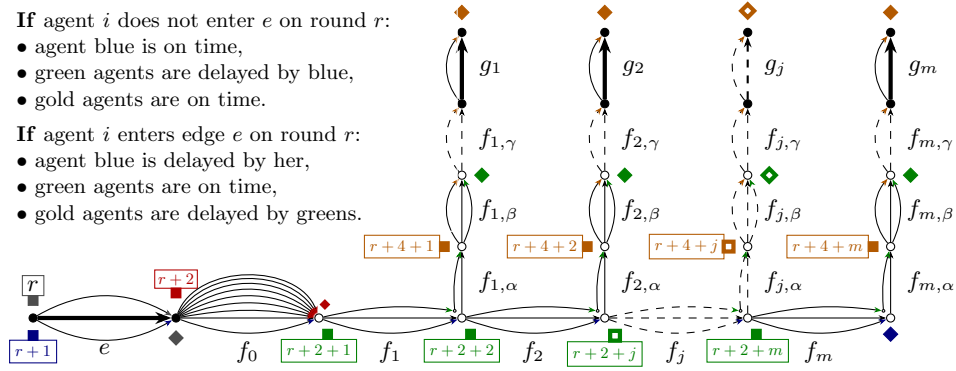
\begin{figure}[t]
\centering
\begin{tikzpicture}[scale=0.95]
\node[circle,draw=black,inner sep=0.0em,fill=black] (e) at (0,0) {$~$};
\node[circle,draw=black,inner sep=0.0em,fill=black] (f0) at (2,0) {$~$};
\node[circle,draw=black,inner sep=0.0em] (f1) at (4,0) {$~$};
\node[circle,draw=black,inner sep=0.0em] (f2) at (6,0) {$~$};
	\node[circle,draw=black,inner sep=0.0em] (f2prime) at (5.9,0.1) {};
\node[circle,draw=black,inner sep=0.0em] (f3) at (8,0) {$~$};
	\node[circle,draw=black,inner sep=0.0em] (f3prime) at (7.9,0.1) {};
\node[circle,draw=black,inner sep=0.0em] (fm) at (10,0) {$~$};
	\node[circle,draw=black,inner sep=0.0em] (fmprime) at (9.9,0.1) {};
\node[circle,draw=black,inner sep=0.0em] (fn) at (12,0) {$~$};
	\node[circle,draw=black,inner sep=0.0em] (fnprime) at (11.9,0.1) {};
\node[circle,draw=black,inner sep=0.0em] (f2a) at (6,1) {$~$};
\node[circle,draw=black,inner sep=0.0em] (f3a) at (8,1) {$~$};
\node[circle,draw=black,inner sep=0.0em] (fma) at (10,1) {$~$};
\node[circle,draw=black,inner sep=0.0em] (fna) at (12,1) {$~$};
\node[circle,draw=black,inner sep=0.0em] (f2b) at (6,2) {$~$};
\node[circle,draw=black,inner sep=0.0em] (f3b) at (8,2) {$~$};
\node[circle,draw=black,inner sep=0.0em] (fmb) at (10,2) {$~$};
\node[circle,draw=black,inner sep=0.0em] (fnb) at (12,2) {$~$};
\node[circle,draw=black,inner sep=0.0em,fill=black] (f2c) at (6,3) {$~$};
\node[circle,draw=black,inner sep=0.0em,fill=black] (f3c) at (8,3) {$~$};
\node[circle,draw=black,inner sep=0.0em,fill=black] (fmc) at (10,3) {$~$};
\node[circle,draw=black,inner sep=0.0em,fill=black] (fnc) at (12,3) {$~$};
\node[circle,draw=black,inner sep=0.0em,fill=black] (f2d) at (6,4) {$~$};
\node[circle,draw=black,inner sep=0.0em,fill=black] (f3d) at (8,4) {$~$};
\node[circle,draw=black,inner sep=0.0em,fill=black] (fmd) at (10,4) {$~$};
\node[circle,draw=black,inner sep=0.0em,fill=black] (fnd) at (12,4) {$~$};
\draw[] 
	(e) edge[-{Stealth[scale=0.6]},ultra thick] node[below=2mm]{$e$} (f0)
	(f0) edge[-{Stealth[scale=0.6]}] node[below=2mm]{$f_0$} (f1)
	(f1) edge[-{Stealth[scale=0.6]}] node[below=2mm]{$f_1$} (f2)
	(f2) edge[-{Stealth[scale=0.6]}] node[below=2mm]{$f_2$} (f3)
	(f3) edge[-{Stealth[scale=0.6]},dashed] node[below=2mm]{$f_{j}$} (fm)
	(fm) edge[-{Stealth[scale=0.6]}] node[below=2mm]{$f_m$} (fn);
\draw[] 
	(f2) edge[-{Stealth[scale=0.6]}] node[right=1mm]{$f_{1,\alpha}$}(f2a)
	(f2a) edge[-{Stealth[scale=0.6]}] node[right=1mm]{$f_{1,\beta}$}(f2b)
	(f2b) edge[-{Stealth[scale=0.6]},dashed] node[right=1mm]{$f_{1,\gamma}$} (f2c)
	(f2c) edge[-{Stealth[scale=0.6]},ultra thick] node[right=1mm]{$g_1$} (f2d);
\draw[] 
	(f3) edge[-{Stealth[scale=0.6]}] node[right=1mm]{$f_{2,\alpha}$}(f3a)
	(f3a) edge[-{Stealth[scale=0.6]}] node[right=1mm]{$f_{2,\beta}$}(f3b)
	(f3b) edge[-{Stealth[scale=0.6]},dashed] node[right=1mm]{$f_{2,\gamma}$} (f3c)
	(f3c) edge[-{Stealth[scale=0.6]},ultra thick] node[right=1mm]{$g_2$} (f3d);
\draw[] 
	(fm) edge[-{Stealth[scale=0.6]},dashed] node[right=1mm]{$f_{j,\alpha}$} (fma)
	(fma) edge[-{Stealth[scale=0.6]},dashed] node[right=1mm]{$f_{j,\beta}$} (fmb)
	(fmb) edge[-{Stealth[scale=0.6]},dashed] node[right=1mm]{$f_{j,\gamma}$} (fmc)
	(fmc) edge[-{Stealth[scale=0.6]},dashed, thick] node[right=1mm]{$g_j$} (fmd);
\draw[] 
	(fn) edge[-{Stealth[scale=0.6]}] node[right=1mm]{$f_{m,\alpha}$}(fna)
	(fna) edge[-{Stealth[scale=0.6]}] node[right=1mm]{$f_{m,\beta}$}(fnb)
	(fnb) edge[-{Stealth[scale=0.6]},dashed] node[right=1mm]{$f_{m,\gamma}$} (fnc)
	(fnc) edge[-{Stealth[scale=0.6]},ultra thick] node[right=1mm]{$g_m$} (fnd);
\node[above = 1mm of e, rectangle,draw=black!70,fill=black!70,inner sep=0.2em] {~};
\node[rectangle, draw=black!70,above = 2.3mm of e, inner sep=0.2em] {$r$};
\node[ below = 1mm of f0, diamond,draw=black!70,fill=black!70,inner sep=0.15em] {~};
\draw[draw=black!70] 
	(e) edge[-{Stealth[scale=0.6]}, bend left = 30] (f0);
\draw[draw=black!50!blue, fill=black!50!blue] 
	node[below = 1mm of e, rectangle,draw=black!50!blue,fill=black!50!blue,inner sep=0.2em] {~}
	node[below = 2.3mm of e,scale=0.67, draw=black!50!blue, text=black!50!blue] {$r+1$}
	node[below = 0.5mm of fn, diamond,draw=black!50!blue,fill=black!50!blue,inner sep=0.15em] {~}
	(e) edge[-{Stealth[scale=0.6]}, bend right = 30] (f0)
	(f0) edge[-{Stealth[scale=0.6]}, bend right = 20] (f1)
	(f1) edge[-{Stealth[scale=0.6]}, bend right = 20] (f2)
	(f2) edge[-{Stealth[scale=0.6]}, bend right = 20] (f3)
	(f3) edge[-{Stealth[scale=0.6]}, bend right = 20,dashed] (fm)
	(fm) edge[-{Stealth[scale=0.6]}, bend right = 20] (fn);
\draw[draw=black!50!green] 
	node[rectangle,below right = 1mm and 0mm of f1,draw=black!50!green,fill=black!50!green,inner sep=0.2em] {~}
	node[diamond,right = 1mm of f2b,draw=black!50!green,fill=black!50!green,inner sep=0.15em] {~}
	node[rectangle,below = 2.2mm of f1,draw=black!50!green, text=black!50!green,scale=0.67] {$r+2+1$}
	(f1) edge[-{Stealth[scale=0.6]}, bend left = 10] (f2prime)
	(f2prime) edge[-{Stealth[scale=0.6]}, bend left =10] (f2a)
	(f2a) edge[-{Stealth[scale=0.6]}, bend right = 25] (f2b);
\draw[draw=black!50!green] 
	node[rectangle,below right = 1mm and 0mm of f2,draw=black!50!green,fill=black!50!green,inner sep=0.2em] {~}
	node[diamond,right = 1mm of f3b,draw=black!50!green,fill=black!50!green,inner sep=0.15em] {~}
	node[rectangle,below = 2.2mm of f2,draw=black!50!green,text=black!50!green,scale=0.67] {$r+2+2$}
	(f2) edge[-{Stealth[scale=0.6]}, bend left = 10] (f3prime)
	(f3prime) edge[-{Stealth[scale=0.6]}, bend left =10] (f3a)
	(f3a) edge[-{Stealth[scale=0.6]}, bend right = 25] (f3b);
\draw[draw=black!50!green] 
	node[rectangle,below right = 1mm and 0mm of f3,draw=black!50!green,ultra thick,inner sep=0.2em] {~}
	node[diamond,right = 1mm of fmb,draw=black!50!green,ultra thick,inner sep=0.15em] {~}
	node[rectangle,below = 2.5mm of f3,draw=black!50!green,text=black!50!green,scale=0.67] {$r+2+j$}
	(f3) edge[-{Stealth[scale=0.6]}, bend left = 10,dashed] (fmprime)
	(fmprime) edge[-{Stealth[scale=0.6]}, bend left =10,dashed] (fma)
	(fma) edge[-{Stealth[scale=0.6]}, bend right = 25,dashed] (fmb);
\draw[draw=black!50!green] 
	node[rectangle,below right = 1mm and 0mm of fm,draw=black!50!green,fill=black!50!green,inner sep=0.2em] {~}
	node[diamond,right = 1mm of fnb,draw=black!50!green,fill=black!50!green,inner sep=0.15em] {~}
	node[rectangle,below = 2.2mm of fm,draw=black!50!green,text=black!50!green,scale=0.67] {$r+2+m$}
	(fm) edge[-{Stealth[scale=0.6]}, bend left = 10] (fnprime)
	(fnprime) edge[-{Stealth[scale=0.6]}, bend left =10] (fna)
	(fna) edge[-{Stealth[scale=0.6]}, bend right = 25] (fnb);
\draw[draw=black!30!orange] 
	node[rectangle,left = 1mm of f2a,draw=black!30!orange,fill=black!30!orange,inner sep=0.2em] {~}
	node[diamond,above = 1mm of f2d,draw=black!30!orange,fill=black!30!orange,inner sep=0.15em] {~}
	node[rectangle,left = 2.3mm of f2a,draw=black!30!orange,text=black!30!orange,scale=0.67] {$r+4+1$}
	(f2a) edge[-{Stealth[scale=0.6]}, bend left = 40] (f2b)
	(f2b) edge[-{Stealth[scale=0.6]}, bend left = 40, dashed] (f2c)
	(f2c) edge[-{Stealth[scale=0.6]}, bend left = 40] (f2d);
\draw[draw=black!30!orange] 
	node[rectangle,left = 1mm of f3a,draw=black!30!orange,fill=black!30!orange,inner sep=0.2em] {~}
	node[diamond,above = 1mm of f3d,draw=black!30!orange,fill=black!30!orange,inner sep=0.15em] {~}
	node[rectangle,left = 2.3mm of f3a,draw=black!30!orange,text=black!30!orange,scale=0.67] {$r+4+2$}
	(f3a) edge[-{Stealth[scale=0.6]}, bend left = 40] (f3b)
	(f3b) edge[-{Stealth[scale=0.6]}, bend left = 40, dashed] (f3c)
	(f3c) edge[-{Stealth[scale=0.6]}, bend left = 40] (f3d);
\draw[draw=black!30!orange] 
	node[rectangle,left = 1mm of fma,draw=black!30!orange,ultra thick,inner sep=0.2em] {~}
	node[diamond,above = 1mm of fmd,draw=black!30!orange,ultra thick,inner sep=0.15em] {~}
	node[rectangle,left = 2.5mm of fma,draw=black!30!orange,text=black!30!orange,scale=0.67] {$r+4+j$}
	(fma) edge[-{Stealth[scale=0.6]}, bend left = 40, dashed] (fmb)
	(fmb) edge[-{Stealth[scale=0.6]}, bend left = 40, dashed] (fmc)
	(fmc) edge[-{Stealth[scale=0.6]}, bend left = 40,dashed] (fmd);
\draw[draw=black!30!orange] 
	node[rectangle,left = 1mm of fna,draw=black!30!orange,fill=black!30!orange,inner sep=0.2em] {~}
	node[diamond,above = 1mm of fnd,draw=black!30!orange,fill=black!30!orange,inner sep=0.15em] {~}
	node[rectangle,left = 2.3mm of fna,draw=black!30!orange,text=black!30!orange,scale=0.67] {$r+4+m$}
	(fna) edge[-{Stealth[scale=0.6]}, bend left = 40] (fnb)
	(fnb) edge[-{Stealth[scale=0.6]}, bend left = 40, dashed] (fnc)
	(fnc) edge[-{Stealth[scale=0.6]}, bend left = 40] (fnd);
\draw[draw=black!30!red] 
	node[above = 2mm of f0, rectangle,draw=black!30!red,fill=black!30!red,inner sep=0.2em] {~}
	node[above right = 1mm and 0mm of f1, diamond,draw=black!30!red,fill=black!30!red,inner sep=0.1em] {~}
	node[above = 3.3mm of f0, text=black!30!red,draw=black!30!red,inner sep=0.2em,scale=0.67] {$r+2$}
	(f0) edge[-{Stealth[scale=0.6]}, bend left = 10] (f1)
	(f0) edge[-{Stealth[scale=0.6]}, bend left = 20] (f1)
	(f0) edge[-{Stealth[scale=0.6]}, bend left = 30] (f1)
	(f0) edge[-{Stealth[scale=0.6]}, bend left = 40] (f1)
	(f0) edge[-{Stealth[scale=0.6]}, bend left = 50] (f1)
	(f0) edge[-{Stealth[scale=0.6]}, bend left = 60] (f1)
	(f0) edge[-{Stealth[scale=0.6]}, bend left = 70] (f1);
\node[rectangle, draw=white, text width=6.4cm,scale=0.9] at (2.8,3.6)  {\textbf{If} agent $i$ does not enter $e$ on round $r$:\\ $\bullet$ agent blue is on time,\\ $\bullet$ green agents are delayed by blue,\\ $\bullet$ gold agents are on time.};
\node[rectangle, draw=white, text width=6.4cm,scale=0.9] at (2.8,2.0) {\textbf{If} agent $i$ enters edge $e$ on round $r$:\\ $\bullet$ agent blue is delayed by her,\\ $\bullet$ green agents are on time,\\ $\bullet$ gold agents are delayed by greens.};
\end{tikzpicture}
\caption{Loosener for (RO) from trigger $(e,r)$ to consequences $(g_1,r_{g_1}),\ldots,(g_m,r_{g_m})$: Starting from edge $e$, there is a path of $m+1$ edges $e, f_0,f_1,\ldots,f_m$. Besides, from any edge $f_j$ ($1\leq j\leq m$), there is a path $f_j,f_{j,\alpha},f_{j,\beta},f_{j,\gamma_1},f_{j,\gamma_2},\ldots,g_j$ to consequence $g_j$. (How many $f_{j,\gamma_\ell}$ edges is specified later.) Agents are as follows. Every agent only has one possible path from his source to his sink. We represent their sources, starting times and sinks by respectively squares, a number near the square and a diamond. A \emph{dark} agent only travels edge $e$ on round $r$. A \emph{blue} agent travels from edge $e$ on round $r+1$ to edge $f_m$, in the best case crossing edge $f_j$ on round $r+2+j$. An arbitrary number of \emph{red} agents all try to only cross edge $f_0$ on same round $r+2$. On every path $f_j,f_{j,\alpha},f_{j,\beta}$, there is a different \emph{green} agent traveling it, starting from round $r+2+j$ on edge $f_j$. Finally, on every path $f_{j,\beta},f_{j,\gamma_1},f_{j,\gamma_2},\ldots,g_j$, there is a \emph{gold} agent traveling it, starting on edge $f_{j,\beta}$ from round $r+4+j$, and in the best case arriving on edge $g_j$ on round $r_{g_j}$. (Hence, there are $r_{g_j}-(r+5+j)$ edges $f_{j,\gamma_\ell}$.) The overall tiebreaking order on agents is defined so that: $\text{blue}\succ\text{green}\succ\text{gold}\succ\text{red}\succ\text{dark}\succ\text{agent } i$.}
\label{fig:loosener}
\end{figure}
\begin{definition}[Loosener]\label{def:loosener}
Given a \frog, an agent $i$, 
a \emph{trigger} couple $(e,r)$ of an edge and a round,
and $m$ \emph{consequence} couples $(g_1,r_{g_1}),\ldots,(g_m,r_{g_m})$ of an edge and a round,
a \emph{loosener} is a piece of \frog\ defined under (RO) as in Figure \ref{fig:loosener},
with the minor assumption that $r_{g_j}\geq r+5+j$, for any $1\leq j\leq m$.\qedef
\end{definition}
\begin{lemma}\label{lem:loosener}
In a loosener, if agent $i$ reaches trigger-edge $e$ on trigger-round $r$,
then her delay on any consequence-edge $g_j$ in consequence-round $r_{g_j}$ is decreased by one round. (An agent with higher priority than $i$ and that was arriving on $g_j$ at $r_{g_j}$ is delayed by one.) Furthermore, the loosener is not a shortcut for $i$: her cost for going through it from round $r$ or later is arbitrarily large.
\end{lemma}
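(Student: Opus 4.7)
The plan is to simulate the loosener round-by-round in the two exhaustive scenarios at the trigger edge and read off the two conclusions. Every agent in the loosener other than $i$ is trivial (each has a unique source-to-sink path), so once I pin down the queue orderings, their behavior is determined. I will interpret ``$i$ reaches $e$ on round $r$'' as ``$i$ is already present in $Q_e$ at round $r$,'' which places $i$ behind the dark agent and, by FIFO, ahead of the blue agent.

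In the no-trigger scenario, the dark agent pops from $e$ at round $r$ and blue (queued at $e$ at the end of round $r$) pops at $r+1$, reaching $f_0$ in time for round $r+2$. At $f_0$ blue queues simultaneously with the red pile and wins the tiebreak since $\textrm{blue}\succ\textrm{red}$, so blue remains on schedule and crosses $f_j$ at $r+2+j$. Each green$_j$, also scheduled to be on top of $Q_{f_j}$ at $r+2+j$, loses the tiebreak to blue and is delayed by one round, reaching $f_{j,\beta}$ only at $r+5+j$---one round behind gold$_j$. Hence gold$_j$ passes $f_{j,\beta}$ unobstructed and arrives at $g_j$ exactly at $r_{g_j}$.

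In the trigger scenario, $i$ sits behind dark in $Q_e$ during round $r$, so $i$ pops at $r+1$ and blue only at $r+2$. Blue then queues at $f_0$ at the end of round $r+2$, strictly behind the red pile that queued at the end of round $r+1$; choosing the number of reds to be at least $m+2$ traps blue at $f_0$ past round $r+2+m$, so blue never disturbs any green$_j$. Consequently green$_j$ runs on schedule and reaches $f_{j,\beta}$ at round $r+4+j$, simultaneously with gold$_j$; the priority $\textrm{green}\succ\textrm{gold}$ now delays gold$_j$ by exactly one round, and since the remainder of its path is free, gold$_j$ arrives at $g_j$ at $r_{g_j}+1$. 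The first part of the lemma now follows immediately: at $(g_j,r_{g_j})$, agent $i$ would have lost a round to gold$_j$ (since $\textrm{gold}\succ i$) in the no-trigger case, whereas in the trigger case that conflict is gone, so $i$'s delay at $(g_j,r_{g_j})$ drops by one.

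For the non-shortcut claim, once $i$ enters $e$ at any round $r' \geq r$ while the red pile is still present at $f_0$, $i$ queues at $f_0$ either simultaneously with or behind the reds; since $\textrm{red}\succ i$, agent $i$ must wait for every red before proceeding. Choosing the number of reds to be an arbitrary polynomial in $|V|$ inflates $i$'s traversal cost past any chosen bound, so the loosener cannot serve as a cheaper route. The place I expect to require the most care is verifying that once blue is held behind the reds, no residual interaction resurrects the no-trigger outcome at any $f_j$: this reduces to the fact that every non-$i$ agent is trivial and confined to its own sub-chain, so no chain ever feeds back into $Q_{f_0}$, and the rest is tight bookkeeping of queue contents and priority-based ties.
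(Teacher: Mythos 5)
Your proposal is correct and follows essentially the same argument as the paper's proof: a case analysis of the trigger vs.\ no-trigger scenarios, using FIFO arrival order to place $i$ between dark and blue, the tiebreak $\text{blue}\succ\text{red}$ (resp.\ late arrival behind the reds) to keep blue on schedule (resp.\ stall her), and the resulting presence or absence of the blue--green and green--gold collisions to shift each gold agent's arrival at $g_j$ between $r_{g_j}$ and $r_{g_j}+1$; the non-shortcut claim likewise matches the paper's appeal to the red pile blocking $i$ at $f_0$. Your added bookkeeping (quantifying the number of reds and checking that no sub-chain feeds back into $Q_{f_0}$) only makes the argument more explicit than the paper's version.
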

\begin{proof}[Lemma \ref{lem:loosener}, under (RO)]
If agent $i$ enters edge $e$ on round $r$, she gets queued before agent blue (but after dark) since she arrives one round earlier than blue. Hence, blue is delayed one round, and then much more by red agents. Therefore, there are no collisions between blue and the $m$ green agents, who are then on time. Consequently, when a green starts from edge $f_j$ on round $r+2+j$, then she enters edge $f_{j,\beta}$ on round $r+4+j$ and induces a delay on the corresponding gold agent who enters edge $g_j$ on round $r_{g_j}+1$. 

If agent $i$ does not enter edge $e$ on round $r$, then blue enters edge $f_0$ on round $r+2$ before the red agents (by priority). Because of every collision blue and the greens on edges $f_j$ at times $r+2+j$, and because of priority $\text{blue}\succ\text{green}$, every green agent is delayed by one round. Consequently, no green agent collides with any gold agent. Therefore, every gold agent enters his edge $g_j$ on round $r_{g_j}$. 

If agent $i$ tries to go through the loosener from round $r$ or later (in order to take a shortcut to some consequence edge), because of dark's priority on her, $i$ enters edge $f_0$ at round $r+2$ or later. Unfortunately, the arbitrarily large number of red agents have priority over her and she gets queued after them.
\qed
\end{proof}

\begin{lemma}\label{lem:br}
\textsc{FRoG/$\mathcal{R}$/Br} is NP-complete under (RO) and (RR).
\end{lemma}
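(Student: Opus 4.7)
I propose to prove NP-membership by a guess-and-simulate argument and NP-hardness by a many-one reduction from 3SAT that chains one loosener per literal onto a ``main path'' walked by agent $i$.

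\emph{Membership in NP.} Every adversary in $\bm{\pi}_{-i}$ is a trivial agent whose explicit path is polynomially bounded, so by adding at most $|\pi_j|$ queueing delays per path and observing that $i$ can delay any adversary by at most one round per edge, all adversaries exit the system by some polynomial round $T^{\star}$, after which $i$ is alone and can reach $s_i^\ast$ in at most $|V|$ further edges. Hence we may assume WLOG that $\theta \le T^{\star}+|V|$ (otherwise any walk suffices) and restrict the certificate to a path $\pi_i$ of length at most $\theta$. The Dijkstra-style event-driven simulator recalled after the \emph{total-delay} definition then evaluates $C_i(\pi_i,\bm{\pi}_{-i})$ in polynomial time, and we accept iff the result is at most $\theta$.

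\emph{NP-hardness from 3SAT.} Given a formula $\phi$ over variables $x_1,\ldots,x_N$ and clauses $C_1,\ldots,C_M$, the construction gives agent $i$ a main path composed of $N$ choice-diamonds followed by $M$ clause-edges $g_1,\ldots,g_M$. Each diamond is a pair of parallel walks of equal length, one containing a trigger edge $e_k^{+}$ (encoding $x_k=\text{true}$) and the other a trigger edge $e_k^{-}$ (encoding $x_k=\text{false}$). Padding edges on the main path ensure that $e_k^{+}$ and $e_k^{-}$ are reached at the same round $r_k$ regardless of the chosen branch, and that every clause edge $g_j$ is reached at a round $r_{g_j}$ large enough to fulfill the assumption $r_{g_j}\ge r_k+5+s$ of Def.~\ref{def:loosener} for its index $s$ in every loosener to which it belongs. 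For each literal $\ell\in\{x_k,\lnot x_k\}$ with clause-index set $J_\ell\subseteq\{1,\ldots,M\}$, attach one loosener with trigger $(e_k^{\pm},r_k)$ and consequences $((g_j,r_{g_j}))_{j\in J_\ell}$; looseners sit on pairwise vertex-disjoint subgraphs so that their private adversaries (dark, blue, green, red, gold) never interfere, and the global tiebreak of (RO) extends each local loosener priority with $\text{locals}\succ i$. Set the threshold $\theta$ to the total delay of $i$ when every $g_j$ is passed at queue position $1$.

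\emph{Correctness and (RR).} By Lem.~\ref{lem:loosener}, entering the trigger of literal $\ell$ at round $r_k$ delays every gold adversary on its consequence edges by one round, so $i$ passes each corresponding $g_j$ at position $1$; not entering the trigger leaves gold on time and costs $i$ exactly one extra round at $g_j$. Each diamond forces $i$ to enter exactly one of $e_k^{\pm}$ at $r_k$, and the ``no shortcut'' clause of Lem.~\ref{lem:loosener} (guaranteed by the arbitrarily many red agents) forbids detouring through any loosener body. Consequently the $2^N$ feasible paths of $i$ correspond bijectively to truth assignments of $\phi$, and $C_i(\pi_i,\bm{\pi}_{-i})=\theta+\#\{\text{clauses falsified by the induced assignment}\}$, so $C_i\le\theta$ iff $\phi$ is satisfiable. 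Under rule (RR) the same reduction works verbatim: because every adversary is trivial and forced to play a single edge per activation, the sequential turn order is functionally equivalent to the overall priority used in the loosener. I expect the main obstacle to be the timing bookkeeping---simultaneously respecting $r_{g_j}\ge r_k+5+s$ across all looseners and keeping sub-gadgets adversary-disjoint so that their effects compose additively as one extra round per falsified clause.
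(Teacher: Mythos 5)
Your membership argument and your handling of (RR) (agent $i$ is the only strategic agent, so the two rules coincide) match the paper. But there is a genuine gap in the hardness reduction: your clause gadget computes a sum where it needs to compute a disjunction. You make each clause a single consequence edge $g_j$ and attach it as a consequence to the loosener of \emph{every} literal occurring in $C_j$. Since looseners are adversary-disjoint, each of those (up to three) looseners contributes its \emph{own} gold agent arriving at $g_j$ at round $r_{g_j}$, and triggering the loosener of one literal only delays that loosener's gold agent. Hence when agent $i$ reaches $g_j$ on time she queues behind one on-time gold agent for each \emph{false} literal of $C_j$, so her waiting time at $g_j$ is $1+\#\{\text{false literals of }C_j\}$, not $1+[C_j\text{ falsified}]$. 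The identity $C_i=\theta+\#\{\text{falsified clauses}\}$ is therefore wrong; what your instance actually minimizes is $\sum_j\#\{\text{false literals of }C_j\}$, i.e.\ it maximizes the total number of true literal occurrences --- an objective solvable in polynomial time by setting each variable to its majority polarity --- so the reduction does not establish NP-hardness. (A secondary issue: once $i$ is delayed at some $g_j$, her arrival rounds at all later $g_{j'}$ shift, so the fixed consequence rounds $r_{g_{j'}}$ no longer describe what she encounters; your additive cost formula ignores this cascade.)

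The fix is the one the paper uses (Fig.~\ref{fig:3SAT}): between consecutive clause vertices $C_j$ and $C_{j+1}$ put \emph{three parallel} two-edge literal-paths, each carrying the consequence of only its own literal's loosener and exactly one blocking agent. Each such path then deterministically costs $2$ if its literal was triggered and $3$ otherwise, and agent $i$ chooses which literal-path to traverse, so she meets the threshold iff \emph{at least one} literal per clause is loosened. The disjunction is encoded by the agent's routing choice, not by superposing delays on a shared edge. Your variable diamonds, the timing bookkeeping $r_{g_j}\ge r+5+j$, and the use of the red agents to forbid shortcuts are all consistent with the paper; only the clause gadget needs this repair.
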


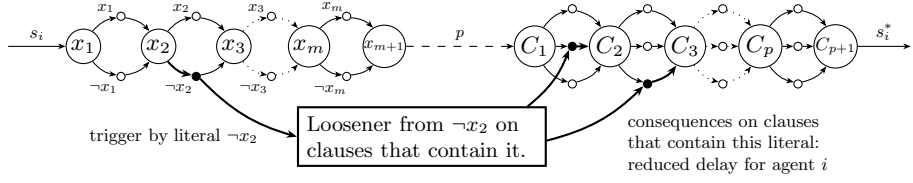
\begin{figure}[t]
\centering
\begin{tikzpicture}
\node[circle,draw=black,inner sep=0.1em] (x1) at (0,0) {${x_1}$};
	\node[circle,draw=black,inner sep=0.0em] (x1a) at (0.5,-0.4) {$~$};
	\node[circle,draw=black,inner sep=0.0em] (x1b) at (0.5,+0.4) {$~$};
\node[circle,draw=black,inner sep=0.1em] (x2) at (1,0) {${x_2}$};
	\node[circle,draw=black,inner sep=0.0em,fill=black] (x2a) at (1.5,-0.4) {$~$};
	\node[circle,draw=black,inner sep=0.0em] (x2b) at (1.5,+0.4) {$~$};
\node[circle,draw=black,inner sep=0.1em] (x3) at (2,0) {${x_3}$};
	\node[circle,draw=black,inner sep=0.0em] (x3a) at (2.5,-0.4) {$~$};
	\node[circle,draw=black,inner sep=0.0em] (x3b)at (2.5,+0.4) {$~$};
\node[circle,draw=black,inner sep=0.1em] (xm) at (3,0) {${x_m}$};
	\node[circle,draw=black,inner sep=0.0em] (xma) at (3.5,-0.4) {$~$};
	\node[circle,draw=black,inner sep=0.0em] (xmb) at (3.5,+0.4) {$~$};
\node[circle,draw=black,inner sep=0.0em,scale=0.72] (xm1) at (4,0) {${x_{m+1}}$};
\draw[] 
	(x1) edge[-{Stealth[scale=0.6]},bend right=20] node[below=1mm,scale=0.67]{$\neg x_1$} (x1a)
	(x1a) edge[-{Stealth[scale=0.6]},bend right=20] (x2)
	(x2) edge[-{Stealth[scale=0.6]},bend right=20,thick] node[below=1mm,scale=0.67]{$\neg x_2$} (x2a)
	(x2a) edge[-{Stealth[scale=0.6]},bend right=20] (x3)
	(x3) edge[-{Stealth[scale=0.6]},bend right=20,dotted] node[below=1mm,scale=0.67]{$\neg x_3$} (x3a)
	(x3a) edge[-{Stealth[scale=0.6]},bend right=20,dotted] (xm)
	(xm) edge[-{Stealth[scale=0.6]},bend right=20] node[below=1mm,scale=0.67]{$\neg x_m$}(xma)
	(xma) edge[-{Stealth[scale=0.6]},bend right=20] (xm1);
\draw[] 
	(x1) edge[-{Stealth[scale=0.6]},bend left=20] node[above=0.5mm,scale=0.67]{$~x_1$} (x1b)
	(x1b) edge[-{Stealth[scale=0.6]},bend left=20] (x2)
	(x2) edge[-{Stealth[scale=0.6]},bend left=20] node[above=0.5mm,scale=0.67]{$~x_2$} (x2b)
	(x2b) edge[-{Stealth[scale=0.6]},bend left=20] (x3)
	(x3) edge[-{Stealth[scale=0.6]},bend left=20,dotted] node[above=0.5mm,scale=0.67]{$~x_3$}(x3b)
	(x3b) edge[-{Stealth[scale=0.6]},bend left=20,dotted] (xm)
	(xm) edge[-{Stealth[scale=0.6]},bend left=20] node[above=0.5mm,scale=0.67]{$~x_m$} (xmb)
	(xmb) edge[-{Stealth[scale=0.6]},bend left=20] (xm1);
\node[circle,draw=black,inner sep=0.1em] (c1) at (6,0) {${C_1}$};
	\node[circle,draw=black,inner sep=0.0em] (c1a) at (6.5,-0.5) {$~$};
	\node[circle,draw=black,inner sep=0.0em] (c1b) at (6.5,+0.5) {$~$};
	\node[circle,draw=black,inner sep=0.0em,fill=black] (c1c) at (6.5, 0.0) {$~$};
\node[circle,draw=black,inner sep=0.1em] (c2) at (7,0) {${C_2}$};
	\node[circle,draw=black,inner sep=0.0em,fill=black] (c2a) at (7.5,-0.5) {$~$};
	\node[circle,draw=black,inner sep=0.0em] (c2b) at (7.5,+0.5) {$~$};
	\node[circle,draw=black,inner sep=0.0em] (c2c) at (7.5, 0.0) {$~$};
\node[circle,draw=black,inner sep=0.1em] (c3) at (8,0) {${C_3}$};
	\node[circle,draw=black,inner sep=0.0em] (c3a) at (8.5,-0.5) {$~$};
	\node[circle,draw=black,inner sep=0.0em] (c3b) at (8.5,+0.5) {$~$};
	\node[circle,draw=black,inner sep=0.0em] (c3c) at (8.5, 0.0) {$~$};
\node[circle,draw=black,inner sep=0.1em] (cp) at (9,0) {${C_p}$};
	\node[circle,draw=black,inner sep=0.0em] (cpa) at (9.5,-0.5) {$~$};
	\node[circle,draw=black,inner sep=0.0em] (cpb) at (9.5,+0.5) {$~$};
	\node[circle,draw=black,inner sep=0.0em] (cpc) at (9.5, 0.0) {$~$};
\node[circle,draw=black,inner sep=0.0em,scale=0.75] (cp1) at (10,0) {${C_{p+1}}$};
\draw[] 
	(xm1) edge[-{Stealth[scale=0.6]}, dashed] node[above,scale=0.67]{$p$} (c1);
\draw[] 
	(c1) edge[-{Stealth[scale=0.6]},bend right=20] (c1a)
	(c1a) edge[-{Stealth[scale=0.6]},bend right=20] (c2)
	(c2) edge[-{Stealth[scale=0.6]},bend right=20] (c2a)
	(c2a) edge[-{Stealth[scale=0.6]},bend right=20,thick] (c3)
	(c3) edge[-{Stealth[scale=0.6]},bend right=20,dotted] (c3a)
	(c3a) edge[-{Stealth[scale=0.6]},bend right=20,dotted] (cp)
	(cp) edge[-{Stealth[scale=0.6]},bend right=20] (cpa)
	(cpa) edge[-{Stealth[scale=0.6]},bend right=20] (cp1);
\draw[] 
	(c1) edge[-{Stealth[scale=0.6]},bend left=20] (c1b)
	(c1b) edge[-{Stealth[scale=0.6]},bend left=20] (c2)
	(c2) edge[-{Stealth[scale=0.6]},bend left=20] (c2b)
	(c2b) edge[-{Stealth[scale=0.6]},bend left=20] (c3)
	(c3) edge[-{Stealth[scale=0.6]},bend left=20,dotted] (c3b)
	(c3b) edge[-{Stealth[scale=0.6]},bend left=20,dotted] (cp)
	(cp) edge[-{Stealth[scale=0.6]},bend left=20] (cpb)
	(cpb) edge[-{Stealth[scale=0.6]},bend left=20] (cp1);
\draw[] 
	(c1) edge[-{Stealth[scale=0.6]}] (c1c)
	(c1c) edge[-{Stealth[scale=0.6]},thick] (c2)
	(c2) edge[-{Stealth[scale=0.6]}] (c2c)
	(c2c) edge[-{Stealth[scale=0.6]}] (c3)
	(c3) edge[-{Stealth[scale=0.6]},dotted] (c3c)
	(c3c) edge[-{Stealth[scale=0.6]},dotted] (cp)
	(cp) edge[-{Stealth[scale=0.6]}] (cpc)
	(cpc) edge[-{Stealth[scale=0.6]}] (cp1);
\draw[] 
	(-1,0) edge[-{Stealth[scale=0.6]}] node[above,scale=0.75]{$s_i$} (x1)
	(cp1) edge[-{Stealth[scale=0.6]}] node[above,scale=0.75]{$s_i^\ast$} (11,0);
\node[rectangle,draw=black,inner sep=0.4em,scale=0.9,thick,text width=3.4cm] (loosen) at (4.5,-1.2) {Loosener from $\neg x_2$ on clauses that contain it.};
\node[scale=0.75] at (1.2,-1.2) {trigger by literal $\neg x_2$};
\node[scale=0.75,text width=3.9cm] at (8.7,-1.3) {consequences on clauses that contain this literal:  reduced delay for agent $i$};
\draw[] 
	(x2a) edge[-{Stealth[scale=0.6]},bend right=15,thick] (loosen)
	(loosen) edge[-{Stealth[scale=0.6]},bend right=15,thick] (c1c)
	(loosen) edge[-{Stealth[scale=0.6]},bend right=15,thick] (c2a);
\end{tikzpicture}
\caption{Reduction from decision problem 3SAT to \textsc{FRoG/$\mathcal{R}$/Br} under (RO) and (RR): Let a 3SAT instance be defined by a list of $m$ binary variables $x_1,\ldots,x_m$ and a list of $p$ 3-clauses $C_1,\ldots,C_p$. A 3-clause is a disjunction of three literals (e.g. $C_j=x_1\vee\neg x_2\vee\neg x_3$). We reduce it to a \textsc{FRoG/$\mathcal{R}$/Br} instance (digraph, agents, priority and threshold). In the digraph, there is a sequence of $m+1$ nodes $x_1,\ldots,x_m,x_{m+1}$, followed by a sequence of $p+1$ nodes $C_1,\ldots,C_p,C_{p+1}$. Between any nodes $x_k$ and $x_{k+1}$, there are two two-edge paths. (Each path represents a choice of valuation for variable $x_k$: true or false.) From node $x_{m+1}$ to node $C_1$, there is one path with $p$ edges. Between any nodes $C_j$ and $C_{j+1}$, there are three two-edge paths that represent the three literals of clause $C_j$. Agent $i$ aims at traveling from source node $x_1$ to sink node $C_{p+1}$; firstly through $m$ choices of valuation for variables, towards node $x_{m+1}$ and then $C_1$; secondly through one literal per node $C_j$, until she reaches node $C_p$. Traveling the first edge of any choice of valuation/literal for variable $x_k$ during round $r=3(k-2)$ loosens on every clause $C_j$ that contains this literal, the arrival time of an agent with higher priority than $i$, from round $3m+p+2j$ to round $3m+p+2j+1$. 
Traveling from vertices $x_1$ to $C_1$ takes constant time $3m+p$. The cost for traveling from vertices $C_j$ to $C_{j+1}$ is two are three, depending on whether a literal was loosened.
The question is whether agent $i$ can travel from vertex $x_1$ to $C_{p+1}$ in less than $\theta=3m+3p$ rounds.}
\label{fig:3SAT}
\end{figure}

\begin{proof}[Lemma \ref{lem:br}, under (RO)]
A path $\pi_i$ with total delay below $\theta$ is a yes-certificate that can be verified in polynomial-time, hence \textsc{FRoG/$\mathcal{R}$/Br} belongs to NP. We show hardness by a reduction from decision problem 3SAT, entirely described in Fig. \ref{fig:3SAT}, and proved below, first under rules (RO).

(yes$\Rightarrow$yes)
Assume a satisfactory instantiation of variables for the 3SAT instance, and let agent $i$ travel from vertices $x_1$ to $x_{m+1}$ accordingly. (It takes two rounds on every trigger and one round on the next edge, hence $3m$ rounds in total.) Since every clause $C_j$ is satisfied by at least one literal (out of three), it means that between $C_j$ and $C_{j+1}$, at least one path has its cost loosened from three to two, path that agent $i$ takes on right time $3m+p+2j$. Therefore her total delay is $3m+3p$.    

(yes$\Leftarrow$yes)
Assume a path with total delay no larger than $3m+3p$, hence equal to it. It means that for every clause $C_j$, at least one delay was loosened (on the right time) by a successful choice of path/instantiation between vertices $x_1$ and $x_{m+1}$, which gives us an instantiation that satisfies the 3SAT instance.
\qed
\end{proof}

\begin{proof}[Generalization of Lemma \ref{lem:loosener} and \ref{lem:br} from rule (RO) to rule (RR)]
Both proofs generalize to rule (RR) by using the same construction. For (RR), it suffices to keep the same overall order over agents, and to observe that since agent $i$ is the only one to be strategical\footnote{The others have singleton strategy-sets: a single path from source to sink.}, no simultaneous decisions occur under (RO), hence rules (RO) and (RR) are here strategically equivalent.
\qed
\end{proof}

We are now ready to state the proof of Th. \ref{th:win} under rules (RO) and (RR). 
\begin{figure}[t]
\centering
\begin{tikzpicture}

\node[circle,draw=black,inner sep=0.1em] (x1) at (0,0) {${x_1}$};
	\node[circle,draw=black,inner sep=0.0em] (x1a) at (0.5,-0.4) {$~$};
	\node[circle,draw=black,inner sep=0.0em] (x1b) at (0.5,+0.4) {$~$};
\node[circle,draw=black,inner sep=0.1em] (x2) at (1,0) {${x_2}$};
\node[circle,draw=black,inner sep=0.1em] (x3) at (2,0) {${x_3}$};
	\node[circle,draw=black,inner sep=0.0em] (x3a) at (2.5,-0.4) {$~$};
	\node[circle,draw=black,inner sep=0.0em] (x3b)at (2.5,+0.4) {$~$};
\node[circle,draw=black,inner sep=0.1em] (xm) at (3,0) {${x_m}$};
\node[circle,draw=black,inner sep=0.0em,scale=0.72] (xm1) at (4,0) {${x_{m+1}}$};
\draw[] 
	(x1) edge[-{Stealth[scale=0.6]},bend right=20,thick] (x1a)
	(x1a) edge[-{Stealth[scale=0.6]},bend right=20]  (x2)
	(x3) edge[-{Stealth[scale=0.6]},bend right=20,thick] (x3a)
	(x3a) edge[-{Stealth[scale=0.6]},bend right=20] (xm);
\draw[] 
	(x1) edge[-{Stealth[scale=0.6]},bend left=20,thick] (x1b)
	(x1b) edge[-{Stealth[scale=0.6]},bend left=20] (x2)
	(x3) edge[-{Stealth[scale=0.6]},bend left=20,thick] (x3b)
	(x3b) edge[-{Stealth[scale=0.6]},bend left=20] (xm);
\draw[]
	(x2) edge[-{Stealth[scale=0.6]}] (x3)
	node[circle,draw=black,fill=white,inner sep=-0.1em] (x2a) at (1.35,0) {$~$}
	node[circle,draw=black,fill=white,inner sep=-0.1em] (x2b) at (1.55,0) {$~$};
\draw[]
	(xm) edge[-{Stealth[scale=0.6]}] (xm1)	
	node[circle,draw=black,fill=white,inner sep=-0.1em] (xma) at (3.35,0) {$~$}
	node[circle,draw=black,fill=white,inner sep=-0.1em] (xmb) at (3.52,0) {$~$};
	
\node[circle,draw=black,inner sep=0.1em] (y1) at (0,2) {${y_1}$};
\node[circle,draw=black,inner sep=0.1em] (y2) at (1,2) {${y_2}$};
	\node[circle,draw=black,inner sep=0.0em] (y2a) at (1.5,+1.6) {$~$};
	\node[circle,draw=black,inner sep=0.0em] (y2b) at (1.5,+2.4) {$~$};
\node[circle,draw=black,inner sep=0.1em] (y3) at (2,2) {${y_3}$};
\node[circle,draw=black,inner sep=0.1em] (ym) at (3,2) {${y_m}$};
	\node[circle,draw=black,inner sep=0.0em] (yma) at (3.5,+1.6) {$~$};
	\node[circle,draw=black,inner sep=0.0em] (ymb) at (3.5,+2.4) {$~$};
\node[circle,draw=black,inner sep=0.0em,scale=0.72] (ym1) at (4,2) {${y_{m+1}}$};
\draw[] 
	(y2) edge[-{Stealth[scale=0.6]},bend right=20,thick] (y2a)
	(y2a) edge[-{Stealth[scale=0.6]},bend right=20] (y3)
	(ym) edge[-{Stealth[scale=0.6]},bend right=20,thick] (yma)
	(yma) edge[-{Stealth[scale=0.6]},bend right=20] (ym1);
\draw[] 
	(y2) edge[-{Stealth[scale=0.6]},bend left=20,thick] (y2b)
	(y2b) edge[-{Stealth[scale=0.6]},bend left=20] (y3)
	(ym) edge[-{Stealth[scale=0.6]},bend left=20,thick] (ymb)
	(ymb) edge[-{Stealth[scale=0.6]},bend left=20] (ym1);
\draw[]
	(y1) edge[-{Stealth[scale=0.6]}] (y2)
	node[circle,draw=black,fill=white,inner sep=-0.1em] (y1a) at (0.35,+2.0) {$~$}
	node[circle,draw=black,fill=white,inner sep=-0.1em] (y1b) at (0.55,+2.0) {$~$};
\draw[] 
	(y3)  edge[-{Stealth[scale=0.6]}] (ym)
	node[circle,draw=black,fill=white,inner sep=-0.1em] (y3a) at (2.35,+2.0) {$~$}
	node[circle,draw=black,fill=white,inner sep=-0.1em] (y3b)at (2.55,+2.0) {$~$};
\begin{scope}[shift={(0,0)}]
\node[circle,draw=black,inner sep=0.1em] (c1) at (5,0) {${C_1}$};
	\node[circle,draw=black,inner sep=0.0em] (c1a) at (5.5,-0.5) {~};
	\node[circle,draw=black,inner sep=0.0em] (c1b) at (5.5,+0.5) {~};
	\node[circle,draw=black,inner sep=0.0em] (c1c) at (5.5, 0.0) {~};
\node[circle,draw=black,inner sep=0.1em] (c2) at (6,0) {${C_2}$};
	\node[circle,draw=black,inner sep=0.0em] (c2a) at (6.5,-0.5) {~};
	\node[circle,draw=black,inner sep=0.0em] (c2b) at (6.5,+0.5) {~};
	\node[circle,draw=black,inner sep=0.0em] (c2c) at (6.5, 0.0) {~};
\node[circle,draw=black,inner sep=0.1em] (c3) at (7,0) {${C_3}$};
	\node[circle,draw=black,inner sep=0.0em] (c3a) at (7.5,-0.5) {~};
	\node[circle,draw=black,inner sep=0.0em] (c3b) at (7.5,+0.5) {~};
	\node[circle,draw=black,inner sep=0.0em] (c3c) at (7.5, 0.0) {~};
\node[circle,draw=black,inner sep=0.1em] (cp) at (8,0) {${C_p}$};
	\node[circle,draw=black,inner sep=0.0em] (cpa) at (8.5,-0.5) {~};
	\node[circle,draw=black,inner sep=0.0em] (cpb) at (8.5,+0.5) {~};
	\node[circle,draw=black,inner sep=0.0em] (cpc) at (8.5, 0.0) {~};
\node[circle,draw=black,inner sep=0.0em,scale=0.75] (cp1) at (9,0) {${C_{p+1}}$};
\draw[] 
	(xm1) edge[-{Stealth[scale=0.6]}, dotted,thick] node[above,scale=0.67]{$p$} (c1);
\draw[] 
	(c1) edge[-{Stealth[scale=0.6]},bend right=20] (c1a)
	(c1a) edge[-{Stealth[scale=0.6]},bend right=20] (c2)
	(c2) edge[-{Stealth[scale=0.6]},bend right=20] (c2a)
	(c2a) edge[-{Stealth[scale=0.6]},bend right=20] (c3)
	(c3) edge[-{Stealth[scale=0.6]},bend right=20] (c3a)
	(c3a) edge[-{Stealth[scale=0.6]},bend right=20] (cp)
	(cp) edge[-{Stealth[scale=0.6]},bend right=20] (cpa)
	(cpa) edge[-{Stealth[scale=0.6]},bend right=20] (cp1);
\draw[] 
	(c1) edge[-{Stealth[scale=0.6]},bend left=20] (c1b)
	(c1b) edge[-{Stealth[scale=0.6]},bend left=20] (c2)
	(c2) edge[-{Stealth[scale=0.6]},bend left=20] (c2b)
	(c2b) edge[-{Stealth[scale=0.6]},bend left=20] (c3)
	(c3) edge[-{Stealth[scale=0.6]},bend left=20] (c3b)
	(c3b) edge[-{Stealth[scale=0.6]},bend left=20] (cp)
	(cp) edge[-{Stealth[scale=0.6]},bend left=20] (cpb)
	(cpb) edge[-{Stealth[scale=0.6]},bend left=20] (cp1);
\draw[] 
	(c1) edge[-{Stealth[scale=0.6]}] (c1c)
	(c1c) edge[-{Stealth[scale=0.6]}] (c2)
	(c2) edge[-{Stealth[scale=0.6]}] (c2c)
	(c2c) edge[-{Stealth[scale=0.6]}] (c3)
	(c3) edge[-{Stealth[scale=0.6]}] (c3c)
	(c3c) edge[-{Stealth[scale=0.6]}] (cp)
	(cp) edge[-{Stealth[scale=0.6]}] (cpc)
	(cpc) edge[-{Stealth[scale=0.6]}] (cp1);
\end{scope}
\node[rectangle,draw=black,text width=1.5cm,scale=0.9,align=center,dotted] at (-1.5,0) {$\exists$-player};
\node[rectangle,draw=black,text width=1.5cm,scale=0.9,align=center,dotted] at (-1.5,2) {$\forall$-player};
\draw[] 
	(-0.7,0) edge[-{Stealth[scale=0.6]}] node[above,scale=0.75]{$s_\exists$} (x1)
	(cp1) edge[-{Stealth[scale=0.6]}] node[above,scale=0.75]{$s_\exists^\ast$} (10,0.0);
\draw[] 
	(-0.7,2) edge[-{Stealth[scale=0.6]}] node[above,scale=0.75]{$s_\forall$} (y1)
	(ym1) edge[-{Stealth[scale=0.6]}] node[above,scale=0.75]{$s_\forall^\ast$} (4.8,2);
\node[rectangle,dashed,draw=black,below right =-0.4mm and -0.5mm of x1a,scale=0.6](ln1){Loosen $\neg x_1$};
\node[rectangle,dashed,draw=black,below right =-0.4mm and -0.5mm of y2a,scale=0.6](ln2){Loosen $\neg x_2$};
\node[rectangle,dashed,draw=black,below right =-0.4mm and -0.5mm of x3a,scale=0.6](ln3){Loosen $\neg x_3$};
\node[rectangle,dashed,draw=black,below right =-0.4mm and -0.5mm of yma,scale=0.6](lnm){Loosen $\neg x_m$};
\node[rectangle,dashed,draw=black,above right =-0.4mm and -0.5mm of x1b,scale=0.6](lp1){Loosen $x_1$};
\node[rectangle,dashed,draw=black,above right =-0.4mm and -0.5mm of y2b,scale=0.6](lp2){Loosen $x_2$};
\node[rectangle,dashed,draw=black,above right =-0.4mm and -0.5mm of x3b,scale=0.6](lp3){Loosen $x_3$};
\node[rectangle,dashed,draw=black,above right =-0.4mm and -0.5mm of ymb,scale=0.6](lpm){Loosen $x_m$};
\draw[dotted]
	(ln1) edge[-{Stealth[scale=0.6]},bend right=18] (cpa)
	(lp2) edge[-{Stealth[scale=0.6]},bend left=25] (cpb)
	(ln3) edge[-{Stealth[scale=0.6]},bend right=15] (c1a)
	(ln3) edge[-{Stealth[scale=0.6]},bend right=15] (c3a)
	(lpm) edge[-{Stealth[scale=0.6]},bend left=30] (c1b)
	(lpm) edge[-{Stealth[scale=0.6]},bend left=30] (c2b)
	(lp1) edge[-{Stealth[scale=0.6]},bend left=15] (c1c)
	(ln2) edge[-{Stealth[scale=0.6]},bend left=0] (c2c)
	(ln2) edge[-{Stealth[scale=0.6]},bend left=28] (c3c)
	(lp3) edge[-{Stealth[scale=0.6]},bend right=8] (c2a)
	(lnm) edge[-{Stealth[scale=0.6]},bend left=25] (c3b)
	(lnm) edge[-{Stealth[scale=0.6]},bend left=35] (cpc);
\end{tikzpicture}
\caption{Reduction from decision problem QSAT to \textsc{FRoG/$\mathcal{R}$/Win} : Let a QSAT instance be defined by a list of $m$ quantified binary variables $Q_1x_1,\ldots,Q_mx_m$ with $Q_k\in\{\exists,\forall\}$ and a list of $p$ 3-clauses $C_1,\ldots,C_p$ with literals defined on any variable. 
E.g.: $\exists x_1\forall x_2\exists x_3\forall x_4
(x_1 \vee \neg x_3\vee x_4)\wedge
(\neg x_2 \vee x_3\vee x_4)\wedge
(\neg x_2 \vee\neg x_3 \vee\neg x_4)\wedge
(\neg x_1\vee x_2\vee \neg x_4)$.
We reduce it to a \textsc{FRoG/$\mathcal{R}$/Win} instance similar to Figure \ref{fig:3SAT}, with the following changes and additions. A new agent who goes from vertices $y_1$ to $y_{m+1}$ is created, in order to model the universal player of the formula. If a binary  variable $x_k$ has quantifier $\exists$ (resp. $\forall$), then the same two literal-choosing two-edge paths as in Fig. \ref{fig:3SAT} are created from vertex $x_k$ to $x_{k+1}$ (resp. from vertex $y_k$ to $y_{k+1}$), and a unique path of three edges is created from vertex $y_k$ to $y_{k+1}$ (resp. from vertex $x_k$ to $x_{k+1}$). Similarly to Fig. \ref{fig:3SAT}, when an agent takes a literal-choosing path from $x_k$ to $x_{k+1}$ (or from $y_k$ to $y_{k+1}$), in clauses $C_j$ that contain this literal in a path to $C_{j+1}$, it loosens the delay from three rounds to two, during round $3m+p+2j$. Is there a strategy for the existential player to reach sink $C_{p+1}$ in less than $\theta=3m+3p$, whatever the universal player decides?}\label{fig:winning}
\end{figure}
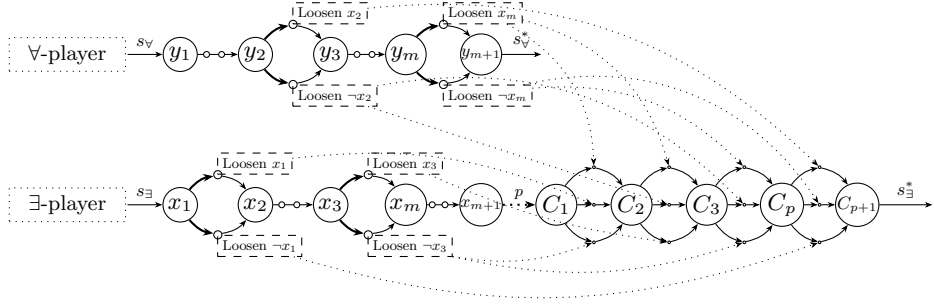
\begin{proof}[Theorem \ref{th:win}]
We show membership to PSPACE by the following minimax algorithm from combinatorial game theory.
Given a \frog\ and a threshold $\theta$,
we must decide whether there is a (winning) strategy $\sigma_i$ for agent $i$ 
that gives her total delay at most $C_i(\sigma_i,\bm{\sigma}_{-i})\leq\theta$ 
for any adversary strategies $\bm{\sigma}_{-i}$.
To do so, we explore game-tree $\Gamma=(\mathcal{H},\mathcal{Q})$ recursively, 
by computing on any subgame $\Gamma[H(r)]$ the best value guaranteed for agent $i$ in $\Gamma[H(r)]$, as follows.
If she is the only agent in $M_{H(r)}$ (to make a decision at this round), then she decides a next edge $e$ that optimizes her total delay according to the values of consecutive subgames $\Gamma((H(r),Q(r+1)))$.
If she does not belong to deciding agents $M_{H(r)}$, then we take the worst value on all consecutive subgames $\Gamma((H(r),Q(r+1)))$.
If there are several agents in $M_{H(r)}$ including her, then a new edge for agent $i$ is evaluated by considering the worst case of what the other agents in $M_{H(r)}$ may decide. 
In this optimization process with alternating minima and maxima, agent $i$ reaches her sink in a polynomial number of edges (because that is her interest), whatever strategies the other agents choose, even the weirdest non-finishing ones (for them). Consequently, we only explore a finite subset of subgames and the (interesting part of the) game tree only has polynomial-depth. Therefore, running  a depth-first search takes (a long time, but) polynomial-space.

We show hardness for PSPACE by a many-one reduction from decision problem QSAT (also known as TQBF), depicted in Figure \ref{fig:winning}.
A QSAT instance is defined by a list of $m$ quantified binary variables $Q_1x_1,\ldots,Q_mx_m$ with $Q_k\in\{\exists,\forall\}$ and a list of $p$ 3-clauses $C_1,\ldots,C_p$ with literals defined on any variable. It asks whether the following formula is true:
$
Q_1x_1,~Q_2x_2,~\ldots,~Q_mx_m,~
C_1\wedge C_2\wedge\ldots\wedge C_p.
$
It is well known that the validity of a QSAT formula can be interpreted as a zero-sum game between an existential player and a universal player \cite[Proof of Th. 4.1]{PAPADIMITRIOU1991}. The formula is true (resp. false) if and only if the existential (resp. universal) player has a winning strategy for instantiating variables.
Here, the universal player is overall indifferent between all the paths from $y_1$ to $y_{m+1}$, and may decide any strategy on universally quantified variables.
Since no simultaneous actions occur in our construct, it holds for both rules (RO) and (RR), by strategical equivalence. 

(yes$\Rightarrow$yes) In the QSAT instance, if the existential player has a winning strategy against the universal player, the existential agent plays it on the path from vertex $x_1$ to $x_{m+1}$, by observing sequentially what the other agent decides from vertex $y_1$ to $y_{m+1}$. Since the formula is satisfied, between any vertex $C_j$ and $C_{j+1}$, there is at least one path loosened on the right round $3m+p+2j$ for the existential agent to travel from vertex $C_1$ to $C_{p+1}$ in $2p$ rounds.

(yes$\Leftarrow$yes) Assume that there is a path strategy for the existential agent from vertex $x_1$ to $x_{m+1}$ such that whatever edges the universal agent decides between $y_1$ and $y_{m+1}$, the delay from vertex $C_1$ to $C_{p+1}$ is $2p$ rounds. Then, between every vertices $C_j,C_{j+1}$, at least one literal-path is loosened. Therefore, the existential player in QSAT has the same winning strategy: the formula is true.\qed
\end{proof}

We now show a deeper negative result on the inapproximability of \frog s under simultaneous and sequential rules (RO) and (RR).

\begin{theorem}\label{th:approx}
It is PSPACE-hard to approximate the optimization version of \textsc{FRoG/$\mathcal{R}$/Win} within $n^{1-\varepsilon}$ for any $\varepsilon>0$, and within any polynomial of $|V|$, under both simultaneous and sequential tiebreaking rules (RO) and (RR). 
\end{theorem}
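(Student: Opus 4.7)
The plan is to amplify the PSPACE-hardness reduction from QSAT used in the proof of Theorem~\ref{th:win} by appending a \emph{bottleneck} of many low-priority dummy agents after vertex $C_{p+1}$, so as to turn its unit yes/no gap into a polynomially large one. Recall that the reduction of Fig.~\ref{fig:winning} produces, from any QSAT instance, a \frog\ whose existential agent $i$ has guaranteed delay exactly $T := 3m+3p$ if the formula is valid, and at least $T+1$ otherwise. I will leave that construction intact and attach only an amplifier.

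The amplifier will consist of a new edge $e_B$ with tail $C_{p+1}$ and head the new sink of $i$, together with $N$ fresh trivial agents, each travelling a disjoint preamble of exactly $T+1$ edges so as to queue into $Q_{e_B}$ in round $T+2$; in the tiebreaking order all $N$ bottleneck dummies will be placed strictly above $i$, and no other priority will be altered. In the \emph{yes} case, $i$ reaches $C_{p+1}$ at round $T$ and is alone in $Q_{e_B}(T+1)$, so she exits with delay $T+1$. In the \emph{no} case, she reaches $C_{p+1}$ no earlier than round $T+1$ and therefore queues into $Q_{e_B}$ at round $T+2$ or later; her lowest priority then sends her behind the remaining dummies, and a direct calculation shows her exit round is $T+N+2$ in every such case. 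The reduction is thus gap-preserving, with yes/no ratio $\Theta(N/T)$.

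To derive each claimed bound I will tune $N$ accordingly. For the $n^{1-\varepsilon}$ inapproximability, I will set $N := \lceil T^{1/\varepsilon}\rceil$; the total number of agents is then $n = \Theta(N)$, and the gap $N/T$ exceeds $n^{1-\varepsilon}$. For inapproximability within any polynomial of $|V|$, I will pick $N$ as a suitably larger polynomial of $T$, which turns the gap into the desired polynomial of $|V| = \Theta(NT)$. In both settings, the construction remains polynomial-time, yielding PSPACE-hardness of approximation under both (RO) and (RR): as in Theorem~\ref{th:win}, the only two strategical agents never decide simultaneously in the layout of Fig.~\ref{fig:winning}, and the bottleneck dummies are trivial, so the two rules are strategically equivalent here.

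The main point to verify will be the robust synchronization of the $N$ dummies: they must all queue at $e_B$ in round $T+2$ regardless of the universal player's choices inside the QSAT portion. This will follow immediately from the construction since each dummy uses a single fixed path, edge-disjoint from every strategical agent's possible trajectory, and its arrival round is therefore entirely determined by its preamble length. Once this is checked, the gap $\Theta(N/T)$ transfers to the claimed inapproximability bounds, completing the proof.
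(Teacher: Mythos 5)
Your overall strategy is exactly the paper's: keep the QSAT construction of Fig.~\ref{fig:winning} intact, append one final bottleneck edge that a crowd of $M$ higher-priority trivial agents enters at round $3m+3p+2$, so that the existential agent pays $\mathrm{OPT}$ if and only if the formula is valid and $\mathrm{OPT}+\Theta(M)$ otherwise, and then tune $M$. The $n^{1-\varepsilon}$ half of your argument is sound in substance (though $N=\lceil T^{1/\varepsilon}\rceil$ should really be ``$N$ a large enough polynomial in $T$ \emph{and} in the number of agents of the base gadget,'' since $n$ also counts the dark/blue/red/green/gold agents of the looseners; this is the same level of hand-waving as the paper's ``for $M$ large enough and still polynomial'').

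There is, however, a genuine gap in your treatment of the ``any polynomial of $|V|$'' half, and it is caused by your choice to synchronize the dummies via $N$ \emph{disjoint preambles of length $T+1$}. That makes $|V|=\Theta(NT)$, and then the bound cannot be salvaged by ``picking $N$ as a suitably larger polynomial of $T$'': to fool a $q(|V|)$-approximation you need $\mathrm{OPT}\cdot q(|V|) < \mathrm{OPT}+N+1$, i.e.\ roughly $T\cdot q(NT) < N$, which already fails for $q(x)=x$. The gap you can create is at most about $n$, so it can never outrun a polynomial of a vertex count that itself grows linearly in $N$. The paper avoids this precisely by keeping $|V|=\mathrm{poly}(m,p)$ \emph{independent of $M$}: the $M$ bottleneck agents are injected at the tail of the last edge at the right round without adding a vertex per agent (using the same informal ``starting time'' convention as in the loosener of Fig.~\ref{fig:loosener}, where arbitrarily many red agents enter a single edge on a single round). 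To repair your version you must synchronize the $N$ dummies using a number of vertices that is polynomial in $(m,p)$ but independent of (or at least sublinear in) $N$ --- e.g.\ many parallel incoming edges to the bottleneck's tail, or the paper's starting-time device --- otherwise only the $n^{1-\varepsilon}$ statement survives.
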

\begin{proof}[Theorem \ref{th:approx}]
Starting from QSAT, we reuse the same reduction  as for Theorem \ref{th:win} (Figure \ref{fig:winning}), and append one last edge after vertex $C_{p+1}$ that the existential agent should cross to reach her goal. A number $M$ of agents enter this last edge on same round $3m+3p+2$, with higher priority than agent $i$. Hence, either agent $i$ succeeds in having total delay $\text{OPT}=3m+3p+1$, either she fails and obtains total delay $\text{OPT}+M$. The resulting problem contains $|V|=\text{poly}(m,p)$ vertices and $n=\text{poly}(m,p)+M$ agents.

If the QSAT formula is a yes-instance, then the optimum for agent $i$ is $3m+3p+1$, and an $n^{1-\varepsilon}$-approximation algorithm should return a solution within $\text{OPT}(\text{poly}(m,p)+M)^{1-\varepsilon}$, which turns out to be strictly less than $\text{OPT}+M$ for $M$ large enough and still polynomial in $(m,p)$. So the only approximate solution becomes the optimum. If the formula is a no, then total delay is the other one: $3m+3p+1+M$. Consequently, $n^{1-\varepsilon}$-approximation is PSPACE-hard.

Similarly, assume a $q(|V|)$-approximation algorithm for any polynomial $q$. 
If the QSAT formula is a yes-instance, then it should return a solution within $\text{OPT}q(|V|)$, which is strictly less than $\text{OPT}+M$ for a large enough polynomial $M(m,p)$. Therefore, $q(|V|)$-approximation is also PSPACE-hard.\qed
\end{proof}

\section{The Complexity of Subgame Perfect Equilibrium}

In this section, we settle the computational complexity of subgame perfect equilibrium in \frog s
under simultaneous actions (RO) and sequential actions (RR).
\begin{theorem}\label{th:exist}
Decision problem \textsc{FRoG/RO/Exist} is PSPACE-complete,
and function problem \textsc{FRoG/RR/Find} is also PSPACE-hard and in FPSPACE.
\end{theorem}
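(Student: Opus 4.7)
The plan is to handle the four bounds separately, reusing the minimax-style polynomial-space exploration from the proof of Theorem \ref{th:win} together with the QSAT-simulating gadgets of Figures \ref{fig:loosener} and \ref{fig:winning}.

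For membership, I would observe that any rational agent reaches her sink within a polynomial number of rounds, so the relevant portion of the game tree has polynomial depth and can be explored by depth-first search, holding only one root-to-leaf branch in memory. For \textsc{FRoG/RR/Find}, exactly one agent decides in each slice of round, and classical backward induction returns the SPE and its (polynomial-size) induced path profile in \textsc{FPSPACE}. For \textsc{FRoG/RO/Exist}, several agents may act simultaneously at a node; after recursively computing the continuation values of every joint action, we enumerate joint actions to check whether the induced normal-form stage game admits a pure Nash equilibrium. Each such check is in polynomial space, and the \frog{} admits an SPE iff a stage-PNE exists consistently at every reachable subgame along the backward induction.

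For hardness, I would reduce from QSAT by augmenting the reduction of Figure \ref{fig:winning}. For \textsc{FRoG/RR/Find}, SPE existence is guaranteed, so I must force the SPE path profile to encode the QSAT answer: I would append after vertex $C_{p+1}$ a small gadget in which the existential agent is compelled by congestion to take one of two distinguishable exit edges according to whether her total delay equals $3m+3p$ or exceeds it; the answer is then read off from the returned path profile in polynomial time. For \textsc{FRoG/RO/Exist}, I would instead attach a pursuer--follower-style subgame (as in \cite[Fig.~1]{ismaili2017}) that has no pure Nash equilibrium under (RO), and gate its activation on the existential agent failing to achieve $\theta=3m+3p$. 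Then an SPE of the whole \frog{} exists iff the existential agent has a winning strategy, iff the QSAT formula is true.

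The main obstacle is engineering the pursuer--follower gadget so that it is strategically inert whenever the existential agent succeeds, yet fully active otherwise, without corrupting the loosener consequences that encode the clauses. I would route its activation through a single ``verdict'' edge placed after all consequence edges, whose state at the critical round selects whether the no-PNE subgame is entered; priorities are set so that a timely arrival of the existential agent disables the gadget, while a late arrival leaves the cyclic subgame intact. A secondary subtlety is to verify that (RO) and (RR) remain strategically equivalent on the original part of the construction, which follows from the same singleton-strategy argument used in generalizing Lemmas \ref{lem:loosener} and \ref{lem:br} to (RR).
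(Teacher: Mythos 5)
Your membership arguments and the overall architecture of the hardness reductions (reuse Figure~\ref{fig:winning}, read the answer off the SPE for \textsc{Find}, gate a no-PNE gadget on the outcome for \textsc{Exist}) match the paper. But both of your reductions have the same genuine gap: in the construction of Figure~\ref{fig:winning} the universal agent is \emph{overall indifferent} between all her paths from $y_1$ to $y_{m+1}$. Indifference is enough for Theorem~\ref{th:win}, where the worst-case adversary is built into the definition of a winning strategy, but it is fatal for an SPE-based reduction: an indifferent agent may play \emph{any} strategy in an SPE. If the formula is false, there can still be an SPE in which the universal agent happens to play a non-blocking strategy, the existential agent reaches $C_{p+1}$ in $3m+3p$ rounds anyway, your ``verdict'' edge reports success, and your pursuer--follower gadget stays disabled. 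So for \textsc{FRoG/RR/Find} the returned path profile does not determine the truth of the formula, and for \textsc{FRoG/RO/Exist} a false formula can still be mapped to a yes-instance.

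The missing idea, which is exactly what the paper adds, is to make the game zero-sum on the relevant dimension: append one last shared edge $e$ reachable by both strategic agents (in constant delay from $C_{p+1}$ and $y_{m+1}$) so that the universal agent is delayed by one round if and only if the existential agent arrives on time and collides with her there. This gives the universal agent a \emph{strict} incentive to block, so in every SPE she plays a genuinely adversarial strategy; the existential agent then succeeds iff the formula is true, and the collision (equivalently, the universal agent's delay) is the signal that either encodes the answer in the induced paths or triggers the cancellation of the no-SPE counter-example of \cite[Fig.~2]{cao2017arxiv}. Your gadgets only instrument the existential agent's outcome; without also tying the universal agent's payoff to that outcome, the reduction does not go through. (A secondary point: your \textsc{Exist} gadget is the pursuer--follower game of \cite[Fig.~1]{ismaili2017}, whereas the paper uses the (RO) counter-example of \cite[Fig.~2]{cao2017arxiv}; either could work once the incentive issue is fixed, but the activation trigger should be the universal agent's delay, not the existential agent's lateness, precisely because only the former is under strict incentive control.)
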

\begin{proof}[Theorem \ref{th:exist}]
We show membership to PSPACE or FPSPACE by the same algorithm,
which recursively explores game-tree $\Gamma=(\mathcal{H},\mathcal{Q})$ as follows.
In the process, we only approve sequential (resp. simultaneous) decision nodes $H(r)$ where (the) agent(s) in $M_{H(r)}$ can play a best-response (resp. pure Nash equilibrium) that leads to a subgame $\Gamma(H(r),Q(r+1))$ that contains an SPE.
In order to only explore a finite subset of subgames, one can cut the sub-trees where an agent went into a dead-end or where an agent's current delay (that is: current round $r$) is already more than $|E|\times|N|$. 
Indeed, these subtrees are obviously not in the best individual interest of those agents, 
since even a blind strategy can do better whatever the others' strategies.
Consequently, these subtrees cannot contain an SPE. 
Because of this cut, the game tree only has polynomial depth. 
Therefore, a depth-first search takes polynomial-space (but a lot of time).

For function problem \textsc{FRoG/RR/Find}, we show PSPACE-hardness by turning Figure \ref{fig:winning} into a zero-sum game between the existential and universal agents: It suffices to add one last edge $e$ on both strategical agents' paths (reachable in some constant delay from vertices $C_{p+1}$ and $y_{m+1}$). On this edge, the universal agent (who was overall indifferent) is now delayed by one round if and only if the existential agent travels from vertex $C_1$ to $C_{p+1}$ in $2p$ rounds (formula satisfied) and then collides with her on edge $e$.
Since any winning strategy is always played in an SPE, the induced paths indicate whether the formula is true or false.

For decision problem \textsc{FRoG/RO/Exist}, we proceed from Figure \ref{fig:winning} as for \textsc{FRoG/RR/Find}, with the following addition:
If the universal agent is delayed, then she is synchronized to collide with and cancel out counter-example \cite[Fig. 2]{cao2017arxiv} (e.g. by delaying an agent therein). Therefore, if the formula is a yes (resp. no), the universal agent is delayed, the counter-example is (resp. not) canceled and the \textsc{FRoG/RO/Exist} instance is a yes (resp. no).\qed
\end{proof}

\section{Conclusion}

In this paper, for sequential routing games, under simultaneous and sequential tiebreaking rules (RO) and (RR), we settle the computational complexity of the winning strategy problem as PSPACE-complete. PSPACE-hardness of subgame perfect equilibrium is a consequence of the complexity of winning strategies.

Interestingly, under tiebreaking rule (RE), where there are priorities on incoming edges, rather than on agents, the best-response problem turns out P-time tractable (see Th. \ref{th:br:RE} below). This result contrasts with intractability results Lemma \ref{lem:br} and \cite[Th. 3-5]{ismaili2017} which assume a tie-breaking order on agents. Rule (RE) (first proposed in \cite{cao2017arxiv}) turns out to be a sane model that does not favor agents by their IDs. Hence, it will be worth studying further.
\begin{theorem}\label{th:br:RE}
Best-response problem \textsc{FRoG/RE/Br} is in P.
\end{theorem}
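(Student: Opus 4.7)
The plan is to reduce \textsc{FRoG/RE/Br} to a polynomial-size shortest-path problem on a time-expanded DAG. I would first simulate the adversary profile $\bm{\pi}_{-i}$ alone with the event-based algorithm of \cite[Prop. 2.2]{harks2016competitive}, \cite[Th. 2]{ismaili2017}, producing in polynomial time the arrival round $\tau^0_j(e)$ of every adversary $j$ on every edge $e$ of her path, and hence the ordered baseline queue $Q_e(t)$ for all $t$ up to the polynomial cutoff $T := |E|\cdot|N|$ justified in the proof of Th.~\ref{th:exist}.

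Next, I would build an auxiliary DAG $\hat{G}$ whose nodes are triples $(v,t,e^{in})$ with $v\in V$, $0\leq t\leq T$, and $e^{in}$ an incoming edge to $v$ (with a dummy $\bot$ for $v=s_i$). From every $(v,t,e^{in})$ and every outgoing edge $e=(v,v')\in E$, I add a transition to $(v',t+w,e)$ of weight $w := w(e,t,e^{in})$, equal to the position $i$ would occupy in $Q_e$ when entering at round $t$ via $e^{in}$: one, plus the number of adversaries $j$ with $\tau^0_j(e)<t$, plus the number of adversaries $j$ with $\tau^0_j(e)=t$ whose incoming edge at $v$ outranks $e^{in}$ under $\succ_v$. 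A single Dijkstra run from $(s_i,0,\bot)$ to any $(s_i^\ast,\cdot,\cdot)$ then returns a candidate best-response delay, to be compared with $\theta$.

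The heart of the proof---and the step that actually uses (RE)---is to show that the baseline-based distance coincides with $i$'s true optimal delay. The upper bound is the immediate direction: following the Dijkstra path, $i$'s actual wait on each edge is at most the baseline wait, because her interference can only delay adversaries (FIFO monotonicity), never speed them up. The lower bound is the delicate direction: any round $i$ could hope to save on a later edge by displacing an adversary $j$ is compensated by a round $i$ herself must spend at the very queue where she outranks $j$, because under (RE) the tiebreaking is symmetric in agent identity and therefore outranking $j$ in a queue forces $i$ to sit in that queue too. I would formalize this via a charging / exchange argument on paths, carefully propagating through cascades in which $j$'s delay triggers further delays on some $j'$, and using the symmetry of $\succ_v$ to keep the ledger balanced.

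The main obstacle is precisely this charging argument: under (RO) the loosener gadget (Def.~\ref{def:loosener}) engineers \emph{strictly} beneficial cascades that break any such ledger, so the argument must use the (RE) symmetry in a genuinely non-local way---one-shot invariance does not suffice. Once the equality of the baseline-based distance and the true optimum is in hand, the polynomial size of $\hat{G}$ and a standard Dijkstra computation finish the proof.
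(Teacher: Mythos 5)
Your construction (simulate $\bm{\pi}_{-i}$ alone to get baseline queues, then run Dijkstra on an auxiliary graph whose nodes remember the incoming edge so that the $\succ_v$ tie-break can be priced) matches the paper's augmented graph $\mathcal{G}=(\{s_i\}\cup E,\mathcal{E})$ in all essentials, and you correctly identify the crux: showing that agent $i$'s own perturbation of the queues does not invalidate the baseline-based distances. But that crux is exactly where your proposal stops. You name the charging/exchange argument as ``the main obstacle'' and sketch a ledger that you do not carry out, so the central correctness claim is asserted rather than proved. Worse, your ``immediate'' direction is not immediate: it is false that $i$'s interference ``can only delay adversaries.'' If $i$ delays $j$ on one edge, a third agent $j'$ who in the baseline sat behind $j$ in some downstream queue may now arrive first and be \emph{sped up}; queue reorderings propagate non-monotonically, so the actual wait along the Dijkstra path is not pointwise bounded by the baseline wait by any one-edge monotonicity argument.

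The paper closes this gap with a different and simpler observation, which is the real content of rule (RE): because delays depend only on when and via which edge an agent arrives --- never on her identity --- the cascade of queue modifications that $i$ triggers propagates through the network exactly at the speed at which the affected agents travel, and that is no faster than $i$ herself travels the same edges. Since $i$ follows a shortest path and the perturbation originates on edges she has already left, she stays strictly ahead of her own wake and never enters a queue whose contents she has altered. Hence the baseline queues are \emph{exact} along any shortest path, and no charging argument (and no case analysis of whether displaced adversaries are delayed or advanced) is needed. If you want to complete your write-up, replace the ledger argument by this outrunning argument; otherwise you must actually balance the ledger through arbitrarily long reordering cascades, which is substantially harder than you suggest.
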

\begin{proof}[Theorem \ref{th:br:RE}]
Let us run an initializing event-based Dijkstra-style algorithm on $G=(V,E)$ with adversary paths $\bm{\pi}_{-i}$.
It provides us with congestions $Q_e(r)$ of every queue $e$ and every round $r$, in polynomial-time. 

We now consider augmented graph $\mathcal{G}=(\{s_i\}\cup E,\mathcal{E})$.
Every vertex $v\in V$ is augmented into origin-vertex couple $(u,v)\in E$ (but the source).
In set $\mathcal{E}$, there is an augmented edge from origin-vertex $(u,v)$ (resp. $s_i$) to origin-vertex $(v,w)$ (resp. $(s_i,w)$) if the former's vertex (head) is the later's origin (tail). Costs are \emph{dynamically} defined as follows: 
If agent $i$ decides next edge $e=(v,w)$ from origin-vertex $(u,v)$ during round $r$, 
then she reaches $w$ on round $r+|Q_e(r+1)|-|P_{(u,v,w)}(r+1)|+1$, where $Q_e(r+1)$ is known from the first algorithm (without agent $i$), and agent set $P_{(u,v,w)}(r+1)\subseteq Q_e(r+1)$ are those entering edge $e=(v,w)$ on same round $r+1$ as agent $i$, but from an origin/edge with lower priority than $(u,v)$ (hence $i$ passes them). 

First, let us assume that agent $i$ (whose shortest path we compute) does not incur any queue modification that she causes (as in Fig. \ref{fig:loosener}).
In augmented graph $\mathcal{G}$, despite the fact that costs are dynamic, agent $i$ has no interest in arriving one round later in any waiting list, since it is the maximal rate at which a queue empties. We can then directly run Dijkstra's algorithm on $\mathcal{G}$.

Under tiebreaking rule (RE), any path has same delay for any agent, since delays only depend on when and where they come from, rather than their IDs (like in rules (RO) and (RR)).
Consequently, on a shortest path, agent $i$ can always go at least faster than the chain of queue-modifications that she triggers, by (for instance) taking the same modified path. 
The frog passes Mach one and crosses roads without (new) collisions (Easter egg).
Therefore, on shortest paths, she never incurs any queue modification that she caused, as assumed above.
\qed
\end{proof}

\textbf{Acknowledgments}\quad I am grateful to Ilan Nehama for his enthusiasm, to Silvia De Bellis for proofreading and to the anonymous reviewers for their tedious work.
Following recent open practices, reviews will be appended to a preprint.

\bibliographystyle{alphaCapitalization}
\bibliography{mybib}

\appendix

\section{Reviews from WINE-2018}

\subsection*{Summary of reviews by the author}

I am grateful to the anonymous reviewers for reading the paper, for their tedious work and their constructive comments. I will follow their advises to extend this work, mainly as follows:
\begin{itemize}
\item[+] Additional conceptual contributions are required, for instance: positive results. This may be achieved by studying the tie breaking rule on incoming edges, that was only explored informally in this work. Focusing on simple compact strategies may also reveal interesting.
\item[+] Current results must be written in a more formal, precise and clear way.
\item[+] Related work should be rewritten more clearly, taking into account the name clash with other models of sequential routing games in literature.
\end{itemize}
It goes without saying that all minor comments will be addressed.
For the sake of transparency and valorisation of anonymous work, the full reviews are provided below.

\subsection{Review 1}

In this paper, the authors study a routing game with a time-component where every edge acts as a queue. Every time step, the player in the front of the queue of an edge enters the end of the queue of the next edge he chooses. Sometimes, multiple players want to enter the same queue at the same time step. Various tie breaking rules are then considered: One where tie breaking is determined through an global agent order, one where the tie breaking is over the incoming edges of each vertex, and one where players choose their next edge sequentially over a global agent order (and hence can let their next choice of edge depend on the preceding agents in the order).

Players want to get from their source to their sink as quick as possible, and strategy of a player thus maps a configuration history to a choice of a new adjacent edge whenever the player is on the top of a queue. Hence, an SPE may exist in such games and it definitely exists under the third tie-breaking rule when only one agent moves each turn.

The paper then gives various hardness and completeness results: It shows that under the first and third tie breaking rules it is PSPACE-complete to determine whether a player has a strategy below a given threshold, and it is NP-complete if the strategies of the other players are given. Furthermore, this result is extended (quite straightforwardly) to some inapproximability results with a big inapproximability factors $n^{1-eps}$ and any $poly(|V|)$. The authors also show that existence of an SPE with tie-breaking rule 1 is PSPACE-complete and finding (the resulting outcome of) one in case of rule 3 is PSPACE hard.

Lastly, the authors show in the conclusion section that computing a best response for a player is in P in case the second tie-breaking rule is used. This result is not announced at all in the rest of the paper, and since this is the only "good news" the paper has to offer, I think it is not the best choice to hide this result at the very end of the paper.

All of the hardness results are centered around a reduction gadget which the authors call a "loosener". This "loosener" construction essentially enables the authors to encode various interrelations between various edges in a graph: It encodes that if an agent is able to reach an edge within a particular time step, then he can traverse a set of various other edges one time unit quicker. What I wrote just now is by the way an easy intuition behind the gadget, that I found missing from the paper. I think it would be very helpful to add such a high-level explanation to the paper.

The paper is entirely built around this idea of a "loosener", and using it to reduce a TQBF formula to these routing game problems. The main reduction is not that straightforward though. I am not sure if the result is that surprising or technically deep, but it is reasonably involved.

Regarding the writing quality of the paper, I think that there is much room for improvement. The English language is sometimes used in a way that is a bit weird, with a strange choice of words and terminology sometimes. On the technical side there are further shortcomings in the writing. In the proofs the authors often opt for explaining things in a not-entirely-precise way, and try to avoid formal notation, apparently in the hope that the reader has the same intuition as the authors when reading through their text. I think that is a mistake. An example is the way they define the concept of a "loosener"; the choice of term is not really clearly motivated in the paper, and after introducing the reduction gadget the authors start to use "to loosen" as a verb throughout the text, without really making clear what that means. There is more of this puzzling stuff throughout the paper and I give some more examples in the list below. Also the main reductions are merely "described" o!
r maybe even "sketched", rather than precisely defined, where this description is deferred to pictures and their informal expostition in the captions. These captions are way too long, by the way.

For the discussion of related literature, the authors mention a lot of papers, but at the end of the discussion it was somehow still very unclear to me how the model that is studied here connects more or less to the other works out there. It is not clear to me if this particular model is studied before and how these other papers more or less related to it. Also I want to point out that there is another notion of a "sequential game" in the literature, which the authors do not mention: this concept has been introduced in the paper "The curse of simultaneity" by Paes Leme et al. and subsequently "sequential routing games" have been studied in two previous WINE papers by De Jong and Uetz (2014), and by Correa et al. (2015). The models are different from the one studied here, but they do involve SPEs, and also simply because of the name clash I think this should be discussed.

For all the reasons combined I am not very convinced that the paper is currently in a proper state to be published. I am hoping that the authors could take a more careful look at how they can best write all of their results down in a precise and clear way, and I very much hope for some additional conceptual contributions besides only this central PSPACE-hardness reduction. Some news on the positive side, maybe for the second variant of tie-breaking rule, would be very welcome.

Here is a list of further remarks, suggestions, and corrections.

- behaviors $\rightarrow$ behavior

- travels $\rightarrow$ travel

- "in delays calculations": i think nothing is really being "calculated" here.

- "complexity of games": Is a game necessarily more complex with temporality, or is it just different? Does the non-existence of equilibria necessarily mean that things are more complex? Does hardness of computing equilibria in a class of games imply that the game itself is more complex?

- intractable; the $\rightarrow$ intractable, and the      (also, provide references to these claims)

- a last $\rightarrow$ another

- was related $\rightarrow$ was modeled to be related

- "It is also reminded" sounds strange. Maybe "We also note"?

- "in real life". Also this sounds weird: The paper [MBP17] itself obviously exists "in real life".

- I am not sure why you mention the Canadian Traveler Problem, as any further relation of your results to that problem seems missing.

- "W.l.o.g. on negative results", the "W.l.o.g." here depends on the scope of games that you want to capture, which you did not define.

- The extensive use of footnotes might be a bit too much, and I do think some of these footnotes would be better placed in the main text.

- "Here is the main loop:" Remove this.

- "Strategical" is a bit of an uncommon term. I think "strategic" sounds much more natural.

- Is the notion of a "trivial" player used anywhere, and is it important to any of the results?

- representing it $\rightarrow$ and representing it

- edge, function $\rightarrow$ edge, which is a function

- "Given strategy-profile $\sigma$, agents play their strategies on the game." This is a sort of empty sentence. It does not say anything.

- "by mapping from path profile". I do not understand what this means.

- "has this number". Has which number?

- by mean of $\rightarrow$ by means of

- "Dijkstra-style pop-the-next-event algorithm". This seems very specialized and I do not think you can assume that the reader knows what you mean here. Please elaborate.

- What is $M_{H(r)}$ in Def. 7? (Ignore this if you defined this earlier in the paper.)

- "since it is in players' interests". This is a very vague statement. You use this term "interests" and things like "because it is in her interests" multiple times in the paper and I think you should try to avoid it and be more precise.

- round, induces $\rightarrow$ round induces

- "(as an empty concept here)": what does this mean?

- Define what you mean with "length function" (I don't think this is a very standard term).

- guaranteed $\rightarrow$ guaranteed to exist

- "formalizable" is not the correct word. All that you talk about is entirely formalizable. Perhaps you mean something like "sensible".

- "adversary". Remove this (twice).

- Try to restate Lemma 1 in a precise and clear way, especially the last sentence is confusing to the reader who does not (yet) understand what the purpose of a "loosener" is. Also things like "trigger-edge" and "consequence-edge/round" are a bit mystifying.

- In the definition of a loosener, contrary to how the model was introduced, starting times are suddenly associated to the players, which is confusing.

- The meaning of the bold edges in the picture is not clear. Only much later I realized that a loosener is defined with respect to an existing network, and that these bold edges are edges in that network.

- "two are three"?

- under rules $\rightarrow$ under rule

- "including her". Who is "her"?

- "because that is in her interest". See my remark above.

- Page 10 of the proof of Theorem 1: Remove "even the weirdest non finishing ones (for them)" and "(a long time, but)", since these remarks are unnecessary and distracting. What do you mean with "instantiating variables", and "our construct", and a "winning" strategy against the universal player, and "observing sequentially"? What do you mean with "the same winning strategy: the formula is true."? ("the formula is true" is not a strategy...)

- In the proof of Theorem 2, should not OPT + M be OPT + M + 1?

- I have the feeling that the encoding you use for the decision problems, such as "FRoG/RO/Exist" is not necessary and detrimental to the quality of this paper. I found myself jumping back and forth to the point where these codes are described.

- Proof of Theorem 3: what do you mean with the "same" algorithm? "which" $\rightarrow$ "that". "cut the sub-trees": I think you mean "prune". Be a bit more precise about "when an agent went into a dead-end". "the best individual interest": see above. What is a "blind strategy"? What does it mean for a subtree to "contain" an equilibrium? What do you mean with "this cut" (do you mean something like "this pruning rule"?)? Remove "but a lot of time". What does "delayed" mean? The last paragraph about a particular counter-example in the paper [CCCW17b] is not acceptable as a proof. The definition background behind this counterexample should then be given, and it should be explained why it would be acceptable to just "cancel out" a particular counter-example. Lastly, I would suggest to find a different way of expressing an instance "being a yes" or "a no" as you currently do.

- tiebreaking $\rightarrow$ tie-breaking

- explain what you mean with "sane"

- Proof of Theorem 4: explain what you mean with an "initializing event-based Dijkstra-style algorithm". What do you mean with "augmented". "(but the source)": But the source what? "later's" $\rightarrow$ "latter's". What is this strange sentence about a frog passing Mach one near the end of the proof? Easter egg???

- In the acknowledgments, which of the authors is "I"?

- In the acknowledgments, you announce that you want to somehow publish the reviews. If there is a rebuttal possible to this review, I would be genuinely interested in understanding why the authors would like to do this? I do not really see the point of it.

\subsection{Review 2}

The authors consider sequential routing games, where players decide which arc to use at each node, instead of choosing the entire source-sink-path at once, as in most routing models. The sequential model feels more natural, as players only incur delays when they share arcs with other players at the same time (round).

The results are non-trivial, mostly (but not always) clearly explained, and seem to be correct. However, most of the results are kind of similar to earlier work (Ism17). Also, the are some (minor) shortcomings with respect to organization of the paper as well as langauge.

Detailed comments:

-Related work that may be of interest is 'The curse of sequentiality in routing games, WINE 2015', which also considers the complexity of finding SPE's in routing games.

-Introduction: 'agents incur congestion to one another' You cannot incur something to someone; incurring means having it yourself.

-It took me a while to realize that agent i can exit the loosener either after arc e, or after any of the g-arcs. It would be helpful to explain this more thoroughly, and also show this in Figure 1.

-Theorem 4 is first mentioned (and proven) in the conclusion. I don't think the conclusion is the right place for a proof.

\subsection{Review 3}

The paper studies the computational complexity of a sequential routing game inspired by the dynamic routing model recently introduced in (Car et al., EC 2017). There is an underlying network and the players have different origins and destinations. Each edge operates as a priority queue and has unit length and unit capacity. The new thing is that the players decide on their paths in a hop-by-hop basis, by making local decisions. So, in each round, all agents on top of their queues (as determined by some priority policy) decide on the next hop of their paths based on the history of the game. The individual cost of each player is the number of rounds required to reach his destination.

The paper shows that if (players have different origins and destinations and) priorities are based on player ids, this game is hard to play / solve. Specifically, the paper shows that computing a best-response path, based on perfect knowledge about the paths selected by others, is NP-hard. Moreover, computing any reasonable approximation to a best-response strategy and determining whether the game admits an SPE are PSPACE-hard and PSPACE-complete, respectively. And computing the set of paths induced by an SPE is PSPACE-hard and in FPSPACE. On the other hand, if priorities are based on the edge ids, we can compute a best-response path efficiently (but we do not know about the complexity of the best-response strategy and an SPE).

The model of the sequential routing game considered in the paper is natural and interesting, and the hardness results are definitely good to know. The gadgets used in the hardness proofs are involved (but tailor-made to the specific priority rules, so probably not useful in other similar settings) and the proofs are far from trivial.

My main concern is that the results might not be so interesting for the WINE community because (i) they apply to specific priority rules (player-based priority rules - the general picture might be very different if the priority rules are edge-based), and (ii) they are not accompanied by some positive results (with the exception of Thm. 4, in the conclusions) which would indicate that this is a reasonable model for rational and computationally-restricted players. So I believe that this is a rather borderline paper for WINE. 

A comment for the authors:

- Related work section is not well organized (especially the first and the last paragraphs). If the authors feel obliged to discuss all these models and results, they should organize the section in a way that helps the reader to navigate through them.

\end{document}